\documentclass[11pt]{article}
\usepackage[T1]{fontenc}

\usepackage{tgheros}
\usepackage{amsmath,amsfonts,amsthm,mathtools,color}
\usepackage{mathtools,color}
\usepackage{fullpage}
\usepackage{thmtools,thm-restate}
\usepackage[linesnumbered,noend,ruled,noline]{algorithm2e}
\usepackage{setspace}
\usepackage{tikz}
\renewcommand{\algorithmcfname}{MECHANISM}
\SetAlFnt{\small}
\SetAlCapFnt{\small}
\SetAlCapNameFnt{\small}
\SetAlCapHSkip{0pt}
\IncMargin{-\parindent}
\DeclarePairedDelimiter{\floor}{\lfloor}{\rfloor}
\usepackage{graphicx}
\usepackage{amsthm}
\usepackage{hyperref}
\usepackage{cleveref}
\usepackage[numbers]{natbib}
\setcitestyle{acmnumeric}
\DontPrintSemicolon

\SetCommentSty{mycommfont}
\SetArgSty{textnormal}
\usepackage{soul}
\usepackage[title]{appendix}
\usepackage{apxproof}
\usepackage{enumitem}

\usepackage{authblk}

\newcommand{\costi}[1][i]{c_{#1}}
\newcommand{\pricei}[1][i]{p_{#1}}

\newcommand{\upperB}[1]{\mathcal{U}}
\newcommand{\lowerB}[1]{\mathcal{L}}

\newtheorem{theorem}{Theorem}
\newtheorem{lemma}[theorem]{Lemma}
\newtheorem*{lemma*}{Lemma}
\newtheorem*{theorem*}{Theorem}
\newtheorem{corollary}{Corollary}

\title{Deterministic Budget-Feasible Clock Auctions\thanks{This project was supported in part by  by NSF grants CCF-2008280 and CCF-1755955. The authors would like to thank Eva Tardos and Georgios Amanatidis for helpful discussions during the early stages of this project.}}
\date{}

\begin{document}
\author[a]{Eric Balkanski\thanks{\texttt{eb3224@columbia.edu}, Industrial Engineering and Operations Research Department}}
\author[a]{Pranav Garimidi\thanks{\texttt{pg2682@columbia.edu}, Computer Science Department}}
\author[b]{Vasilis Gkatzelis\thanks{\texttt{gkatz@drexel.edu}, Computer Science Department}}
\author[b]{Daniel Schoepflin\thanks{\texttt{schoep@drexel.edu}, Computer Science Department}}
\author[b]{Xizhi Tan\thanks{\texttt{xizhi@drexel.edu}, Computer Science Department}}
\affil[a]{Columbia University}
\affil[b]{Drexel University}
\date{} 
\renewcommand\Affilfont{\itshape\small}
\renewcommand\Authands{, }

\newcommand{\bidders}{\mathcal{N}}

 \newcommand{\setn}{{[n]}}
 \newcommand{\B}{{B}}
\newcommand{\V}{{v}}
\newcommand{\M}{{\mathcal{M}}}
\newcommand{\bids}{{\mathbf b}}
\newcommand{\pay}{{\mathbf p}}
\newcommand{\costs}{{\mathbf c}}

\newcommand{\marginal}[2]{v(#1 ~|~ #2)}

\newcommand{\vali}[1][i]{v(\{#1\})}

\newcommand{\Scurr}{S_{t}}

\newcommand{\Sprev}{S_{t-1}}

\newcommand{\Sfinal}{S}

\newcommand{\OPT}{\texttt{OPT}} 

\newcommand{\gOPT}{\tilde{\texttt{OPT}}}

\newcommand{\finalround}{\hat{t}}
\newcommand{\thresholdround}{t^*}

\newcommand{\ALG}{\texttt{ALG}}

\newcommand{\argmax}{\text{argmax}}

\newcommand{\mechone}{\textsc{Iterative-Pruning}}

\newcommand{\mechtwo}{\textsc{Simultaneous-Iterative-Pruning}}

\newcommand{\unmax}{\textsc{UnconstrainedSubMax}}

\newcommand{\R}{R}

\newcommand{\firstset}{W_1}

\newcommand{\secondset}{W_2}

\newcommand{\thirdset}{W_3}

\newcommand{\lastbidder}{j^*}

\newcommand{\combinedset}{\firstset\cup \secondset}

\newcommand{\bfirstset}{W_1}
\newcommand{\bsecondset}{\overline{W}_2}

\newcommand{\rconstant}{\beta}

\newcommand{\firstconstant}{\alpha}

\maketitle
\begin{abstract}
We revisit the well-studied problem of budget-feasible procurement, where a buyer with a strict budget constraint seeks to acquire services from a group of strategic providers (the sellers). During the last decade, several strategyproof budget-feasible procurement auctions have been proposed, aiming to maximize the value of the buyer, while eliciting each seller's true cost for providing their service. These solutions predominantly take the form of randomized sealed-bid auctions: they ask the sellers to report their private costs and then use randomization to determine which subset of services will be procured and how much each of the chosen providers will be paid, ensuring that the total payment does not exceed the buyer's budget. Our main result in this paper is a novel method for designing budget-feasible auctions, leading to solutions that outperform the previously proposed auctions in multiple ways.

First, our solutions take the form of descending clock auctions, and thus satisfy a list of very appealing properties, such as obvious strategyproofness, group strategyproofness, transparency, and unconditional winner privacy; this makes these auctions much more likely to be used in practice. Second, in contrast to previous results that heavily depend on randomization, our auctions are deterministic. As a result, we provide an affirmative answer to one of the main open questions in this literature, asking whether a deterministic strategyproof auction can achieve a constant approximation when the buyer's valuation function is submodular over the set of services. In addition to this, we also provide the first deterministic budget-feasible auction that matches the approximation bound of the best-known randomized auction for the class of subadditive valuations. Finally, using our method, we improve the best-known approximation factor for monotone submodular valuations, which has been the focus of most of the prior work.

\end{abstract}

\thispagestyle{empty} 
\clearpage
\pagenumbering{arabic}

\section{Introduction}
 A decade ago, the seminal paper of \citet{singer2010budget} was the first to analyze the following important mechanism design problem: a buyer with a hard budget constraint, $B$, is looking to acquire some services (or goods) from a group of sellers, $\bidders$. The buyer has a value function $v(S)$ for receiving the services of each subset of sellers $S\subseteq \bidders$, and her goal is to maximize this value, but each seller $i\in \bidders$ has a \emph{private} cost $\costi$ for providing the service and would need to be compensated accordingly. The objective in this problem is to design a polynomial-time auction that determines which subset of services, $S$, the buyer should acquire and what payment $\pricei$ each seller $i\in S$ should receive, while ensuring \emph{budget feasibility}, i.e., $\sum_{i\in S}\pricei \leq B$, and \emph{strategyproofness}, i.e., that reporting their true costs is the optimal strategy for all sellers. The main result of \citet{singer2010budget} was a prior-free auction (i.e., one that has no prior information regarding the sellers' costs) that achieves a constant approximation of the optimal value when the buyer's valuation function is monotone submodular.

Since then, this problem has received a lot of attention due to its distinctive combination of practical and theoretical appeal. From a practical standpoint, budget-feasible procurement captures a multitude of application domains, ranging from crowdsourcing markets \citep{SM13, anari2014mechanism}, to influence maximization \citep{singer2012win} and data acquisition \citep{roth2012conducting,horel2014budget}. From a theoretical standpoint, this problem stands out because, unlike most mechanism design problems, budget feasibility imposes a non-trivial constraint on the payments that the mechanism can use, which introduces new challenges. A list of impressive results managed to overcome many of these challenges, leading to several prior-free budget-feasible auctions for instances where the buyer's valuations are additive~\citep{chen2011approximability,anari2014mechanism,gravin2019optimal}, monotone submodular~\citep{chen2011approximability,anari2014mechanism,JT21}, non-monotone submodular~\citep{amanatidis2019budget,bei2017worst}, and even subadditive~\citep{dobzinski2011mechanisms,bei2017worst}.

All these results take the form of sealed-bid auctions: the sellers are asked to reveal their private costs to the auctioneer, who then uses this information to decide the outcome. Although sealed-bid auctions are ubiquitous in the mechanism design literature, they have significant shortcomings. For example, they lack transparency, so the bidders need to trust that the auctioneer will not mishandle their private information and will faithfully implement the auction protocol. Furthermore, even if a sealed-bid auction is provably strategyproof, in practice bidders often lie to such auctions (see, e.g., \cite{KHL87}), partly because their strategyproofness may be hard to verify. Motivated by this discrepancy, \citet{li2017obviously} recently introduced a more demanding notion, known as \emph{obvious strategyproofness}. In an obviously strategyproof auction, the bidders can trivially verify that they cannot benefit by manipulating the auction, and experimental evidence verifies that their behavior in practice conforms with the rules of these auctions.
This also implies other sought-after incentive properties, such as weak group strategyproofness (i.e., no coalition of bidders can misreport collectively and all benefit).

In search for a practical alternative to sealed-bid auctions, \citet{milgrom2020clock} recently identified a particularly noteworthy class of obviously strategyproof auctions, known as \emph{clock auctions}.  
In contrast to sealed-bid auctions, budget-feasible clock auctions take place over multiple rounds: in each round they offer a price to each bidder and the bidders have the opportunity to reject the price offered to them and permanently exit the auction. The price offered to each bidder weakly decreases over time, and the auction terminates when the prices offered to the bidders that remain active add up to no more than the budget, at which point the buyer acquires the services of the active bidders at the last price that they were offered. Apart from being obviously strategyproof, these auctions are highly transparent and do not require that the bidder trust the auctioneer. Motivated by these highly appealing properties, in this work we set out to design budget-feasible clock auctions.

Another important limitation of the previously proposed budget-feasible mechanisms is that the vast majority of them heavily rely on randomization, 
making it unlikely that they would be used in practice: on one hand, bidders can find the notion of randomization and its impact confusing and, on the other hand, it can be hard to verify that the resulting outcome is indeed the product of the prescribed randomization
\citep{AME2019uncertainty,JM17}. Furthermore, the performance bounds of randomized mechanisms are guaranteed only in expectation rather than ex-post. 
However, the design of deterministic budget-feasible auctions for submodular valuations has remained elusive, with  \citet{amanatidis2019budget} pointing to the problem of ``obtaining deterministic, budget-feasible, O(1)-approximation mechanisms—or showing that they do not exist'' as the most intriguing open problem in this line of work.

\subsection{Our Results}

In this paper we propose a new method for designing budget-feasible auctions that simultaneously addresses many of the shortcomings of previous mechanisms. First, our results take the form of \emph{clock auctions}, thus avoiding the shortcomings of sealed-bid auctions and leading to solutions that are much more likely to be used in practice. Second, they are \emph{deterministic}, making them even more practical and guaranteeing their performance ex-post. Finally, our auctions \emph{either beat or match the best known approximation bounds} by any other polynomial time strategyproof auction.
The approximation guarantees that we achieve through our deterministic clock auctions are:
\begin{itemize}[leftmargin=*]
    \item \textbf{monotone submodular valuations (Section~\ref{sec:monotone}):} our deterministic approximation of $4.75$, improves the best known randomized approximation of $5$ by~\citet{JT21}. 
    \item \textbf{non-monotone submodular valuations (Section~\ref{sec:nonmonotone}):} our deterministic approximation of $64$, improves the best known randomized approximation of $505$ by~\citet{amanatidis2019budget}.
    \item \textbf{subadditive valuations (Section~\ref{sec:subadditive}):} our deterministic approximation of $O(\log(n)/\log\log(n))$ matches the best known randomized approximation by~\citet{bei2017worst} and improves the best known deterministic approximation of $O(\log^3(n))$ by~\citet{dobzinski2011mechanisms}.
\end{itemize}

As a corollary, we resolve the open question posed by \citet{amanatidis2019budget} in the affirmative, by providing the first deterministic mechanism that combines strategyproofness with a constant factor approximation for instances where the buyer valuations are submodular. In fact, rather than using a sealed-bid auction, we achieve this result using the more restrictive class of clock auctions.

Our method for designing these budget-feasible clock auctions proceeds by initially making a pessimistic estimate regarding the quality of the optimal solution and determining the first set of prices to offer to the bidders, aiming to achieve that pessimistic estimate. If some budget-feasible group of bidders that accepted their prices is sufficient to satisfy the pessimistic estimate, this group is temporarily set aside. Then, the estimate is updated to be slightly more ambitious, and a new set of prices is offered to bidders, except the ones that were set aside. If this more ambitious estimate is achieved by a new budget-feasible group of bidders, then the set-aside bidders are replaced by the new group, and the process continues until the auction reaches an estimate that it is unable to achieve. By setting aside the group of bidders that achieved the latest estimate, these auctions secure that estimate before attempting to reach a more ambitious estimate. Meanwhile, these estimates are used as a guide for gradually more demanding pricing, so the process that leads to the discovery of the final prices has a primal-dual flavor. When designing our auctions the pace at which we raise the estimate balances a subtle tradeoff between increasing it slowly enough to avoid overshooting the target value we wish to reach, yet quickly enough to limit the  loss from optimal bidders who are eliminated in each phase due to overlapping value with non-optimal bidders.  

\paragraph{Clock auctions and algorithms} All clock auctions automatically satisfy multiple very appealing properties, like obvious-strategyproofness, weak group strategyproofness, transparency, simplicity, and unconditional winner privacy, which are, in general, not satisfied by sealed-bid auctions~\citep{milgrom2020clock}.
An exciting implication is that this reduces the problem of designing practical auctions to the \emph{purely algorithmic problem} of designing price increase trajectories without worrying at all about incentives. For instance, note that this paper requires no proofs regarding strategyproofness: we just design algorithms that follow the format of clock auctions and analyze their worst-case approximation guarantees. 
In fact, \citet{milgrom2020clock} proved that clock auctions correspond to a specific class of backward greedy algorithms. Specifically, every budget-feasible clock auction corresponds to a multi-round greedy algorithm that, in each round, summarizes the ``attractiveness'' of each bid using a score that depends only on its cost, and then myopically eliminates the active bidder with the lowest score. This process continues until the algorithm terminates and accepts all remaining bidders. Therefore, designing clock auctions is equivalent to designing backward greedy algorithms.  

This connection between clock auctions and backward greedy algorithms illuminates some non-trivial design challenges that we had to overcome. Although there are several classic results that use forward greedy algorithms for approximately maximizing a submodular function\footnote{In fact, many of the known strategyproof budget-feasible auctions for submodular valuations closely resemble these classic results, and are based on forward greedy algorithms.}, we are not aware of any prior work that achieves comparable guarantees using backward greedy algorithms. A crucial difference is that forward greedy algorithms proceed by iteratively \emph{adding} bidders to a set of accepted bidders, $A$, and they can myopically decide which bidder $i$ to add based on the marginal increase in value relative to the set of accepted bidders up to this point, i.e., $v(A\cup\{i\})-v(A)$. On the other hand, backward greedy algorithms need to myopically \emph{eliminate} the ``least appealing'' bidders. The main challenge is that it is hard for backward greedy algorithms to gauge the marginal contribution of each bidder relative to the accepted bidders since it cannot foresee who the accepted bidders are going to be. The backward greedy algorithm corresponding to our clock auction  functions as follows: in each round we set a gradually more demanding threshold regarding what would make each bidder ``acceptable'' and then reject the first bidder who does not pass that threshold.

\paragraph{Overcoming the dependence on randomization.}
Prior work leverages randomization in two fundamental ways. First, many of the mechanisms gradually construct two (overlapping) subsets of bidders and then choose which one of these two sets will be accepted, uniformly at random. This guarantees that the expected value of the outcome will be at least half of the maximum value among the two subsets, and it, crucially, maintains the monotonicity of the allocation rule, thus not compromising the incentives of the mechanism (see, e.g.,~\citep{chen2011approximability,JT21} for more details). Second, some recent mechanisms randomly sample some of the bidders, and then use the values of the sampled bidders to determine an estimate of the optimal value~\citep{bei2017worst,amanatidis2019budget}. Then, using this estimate as a benchmark, they approach the non-sampled bidders and offer them take-it-or-leave-it prices. 
Deterministic auctions cannot use random sampling to estimate the optimal value, but our auctions overcome this issue by gradually adjusting the estimate until they reach a reasonable approximation.
Our results are in contrast to previous work that has uncovered simple instances where the performance of deterministic clock auctions is asymptotically different than that of randomized ones~\cite{dutting2017performance}.

\subsection{Related Work}\label{sec:relatedwork}

\paragraph{Budget-feasible clock auctions}
Prior to this work, there were only a few examples of budget-feasible clock auctions in the literature. 
\citet{ensthaler2014dynamic} and \citet{JM17} focused on the very special case where $v(S)=|S|$.
\citet{BKS12} and \citet{BH16} designed budget-feasible posted-price mechanisms, which are a special type of clock auction, but the former only obtained a $O(\log n)$ approximation for monotone submodular valuations and the latter considered a Bayesian setting where the costs are drawn from a prior distribution known to the auctioneer.
All other known budget-feasible mechanisms take the form of sealed-bid auctions, and the vast majority of these auctions cannot be implemented as clock auctions, with just a few exceptions~\citep{bei2017worst,gravin2019optimal,amanatidis2019budget}. The mechanisms in~\cite{bei2017worst,amanatidis2019budget} rely on randomization by using randomized sampling to learn the costs of some subset of the bidders and then use this information to determine posted prices for the remaining bidders.
We have also verified that the two auctions in~\citep{gravin2019optimal} can be implemented as clock auctions, but their guarantees are restricted to the case of additive valuations.

\paragraph{Other budget-feasible auctions}
Starting from the results of \citet{singer2010budget} several budget-feasible auctions have been proposed. For the special case of additive valuations, \citet{chen2011approximability} improved the approximation by providing a $2+\sqrt{2}$-approximate deterministic mechanism and a $3$-approximate randomized mechanism. They also proved a lower bound of $2$ for any randomized mechanism and a lower bound of $1 + \sqrt{2}$ for any deterministic one.\footnote{These lower bounds are the best known even for the much more general class of subadditive valuation functions.} \citet{gravin2019optimal} proved a matching upper bound for randomized mechanisms and additive valuations and also provided a $3$-approximate deterministic mechanism. 
For monotone submodular  functions, \citet{chen2011approximability}
gave 
a $7.91$-approximate randomized 
mechanism and \citet{JT21} further improved this result, obtaining a randomized mechanism that achieves the best known approximation of 5. In this paper we achieve an improved approximation of $4.75$, while also providing the first polynomial time auction to achieve any constant approximation for monotone submodular valuations without using randomization.

Beyond monotone submodular valuations, finding mechanisms with small constant approximation factors has proven more elusive.  For non-monotone submodular valuations, \citet{amanatidis2019budget} gave a $505$-approximate randomized 
mechanism. 
Using similar techniques  
\citet{bei2017worst} had previously managed to design a randomized mechanism obtaining a $768$-approximation for XOS valuations. 
For the class of subadditive valuations, \citet{dobzinski2011mechanisms} gave a $O(\log^2{n})$-approximate randomized mechanism and a $O(\log^3(n))$-approximate deterministic one. \citet{bei2017worst} improved the former by giving a $O(\log{n}/\log{\log{n}})$-approximate randomized mechanism.\footnote{The results for XOS and subadditive valuations are obtained in the \emph{demand oracle model}, not in the value oracle model our mechanisms for submodular valuations use.  It takes an exponential number of value queries in expectation to obtain a (randomized) $n^{1-\epsilon}$-approximation for XOS function maximization for any fixed $\epsilon > 0$ \citep[Theorem 6.2]{amanatidis2019budget}.} Section~\ref{sec:nonmonotone} improves the best known approximation for non-monotone submodular from 505 to 64, while providing the first deterministic auction to achieve any constant approximation for this class. Section~\ref{sec:subadditive} improves the best known deterministic approximation for subadditive valuations from $O(\log^3(n))$ to $O(\log{n}/\log{\log{n}})$, matching the best known randomized approximation for this class.

Some prior work has also designed mechanisms for the more tractable \emph{large-market} model, which assumes that every bidder represents a vanishing portion of the optimal value. This assumption sidesteps some of the main obstacles that arise in budget-feasible mechanism design and enables better approximation for additive~\citep{anari2014mechanism} and monotone submodular valuations~\citep{JT21}.

\paragraph{``Simple'' mechanisms and other clock auctions}  Our work also adds to the developing literature on simplicity in mechanism design (e.g., \citep{babaioff2014simple,rubinstein2016computational,chen2018complexity}). Even if a mechanism is strategyproof, it may not be readily used in practice \citep{ausubel2006lovely}, e.g., because the participants may not understand or trust that the mechanism is strategyproof~\citep{KHL87, li2017obviously}.  Designing simple mechanisms often requires that the algorithmic processes are straightforward so that the participants can understand them.  Clock auctions, however, present an extremely simple, even obviously strategyproof, interface to the bidders regardless of how sophisticated the algorithmic techniques computing the clock prices are.  
Clock auctions then provide a striking balance of algorithmic richness with practical applicability.  

Motivated by the highly desirable characteristics of clock auctions, there is a growing literature examining their performance in a variety of settings, including procurement settings without budget constraints \citep{kim2015welfare}, forward auction settings where the bidders have private values for being served and there is a publicly known constraint system over the sets of feasible bidders \citep{dutting2017performance,gkatzelis2017deferred}, and double auction settings where the auctioneer is interacting with both buyers and sellers \citep{dutting2017modularity, loertscher2020asymptotically}.

\section{Preliminaries}
We consider a procurement setting with a set $\bidders$ of $n$ bidders each of which is capable of providing some service to the auctioneer. 
Each bidder $i \in \bidders$ has a private cost $\costi \geq 0$ which indicates the minimum payment that $i$ would require in order to provide her service.  
The auctioneer has a budget $\B$ that they can spend on services, and a non-negative valuation function $\V$: $2^{\bidders} \rightarrow \mathbb{R}^{\geq 0}$ that defines the value the auctioneer receives from acquiring the services of the bidders in each subset $S \subseteq \bidders$.

We say that the valuation function of the auctioneer $\V$ is \textit{monotone} if $\V(S) \leq \V(T)$ for any $S \subseteq T \subseteq \bidders$ and \textit{submodular} if  $\V(S)+\V(T)\geq \V(S\cap T) + \V(S \cup T)$  for all $S, T \subseteq \bidders$.  An equivalent definition of submodularity  is that a function $\V$ is submodular  if it satisfies the following diminishing returns property: $\V(X \cup \{i\}) - \V(X) \geq \V(Y \cup \{i\}) - \V(Y)$ for all sets $X \subseteq Y \subset \bidders$ and all $i \notin Y$.  We say a valuation function is \textit{subadditive} if $\V(S \cup T) \leq \V(S) + \V(T)$ for all $S, T \subseteq \bidders$. 
For any sets $S,T \subseteq \bidders$, we denote the \textit{marginal contribution} of $T$ when added to $S$ as $\marginal{T}{S} = v(S \cup T) - v(S)$.  

A \emph{(descending) clock auction} offers a sequence of non-increasing prices computed using only public information, one in each phase of the auction, to bidders. In other words, let $p_{i,t}$ denote the price the mechanism offers to bidder $i$ in phase $t$. Then, we have $p_{i,t} \leq p_{i,t-1}$ for all bidders $i$ and phases $t$. Upon receiving their offer, each bidder may choose to exit or continue the auction. Bidders who choose to continue are said to ``accept'' the lower price and are called \emph{active} bidders. We denote the set of active bidders at the end of phase $t$ as $A_t$ with $A_t \subseteq A_{t-1} \subseteq \dots \subseteq A_1 \subseteq \bidders$.  When the auction ends in phase $\hat{t}$, some subset of the active bidders is selected as the winning set $W$ and the service of each $i \in W$ is acquired at her most recently accepted price.
An auction that chooses a winning set $W$ and charges each $i\in W$ a price $p_{i,\hat{t}}$ is \emph{budget feasible} if $\sum_{i \in W} p_{i,\hat{t}} \leq \B$.  

We measure the performance of our mechanisms by comparing them against the optimal value achievable by a computationally unbounded auctioneer that also knows every bidder's true cost. If the auctioneer knew the private costs $\costs = (c_i)_{i \in \bidders}$ of the bidders, she would be able to select the subset $W$ of services with the maximum total value under the budget constraint, paying each bidder $i$ a price $p_i = c_i$. For some instance $I$, let $\mathcal{O}(I)$ denote the optimal set of bidders to be served in instance $I$ and $\OPT$ denote $v(\mathcal{O}(I))$.  Similarly, let $\M(I)$ denote the set of bidders served in instance $I$ by some mechanism $\mathcal{M}$.  We say that $\mathcal{M}$ achieves an approximation factor $\rho\geq 1$ for a class of instances $\mathcal{I}$ if it always extracts at least a $1/\rho$ fraction of the optimal value, i.e., $\rho \geq \sup_{I\in\mathcal{I}} \frac{\OPT}{v(\mathcal{M}(I))}.$

\section{Monotone Submodular Valuations}\label{sec:monotone}
In this section we develop a deterministic clock auction that achieves a $4.75$ approximation for any monotone submodular valuation function.  This is the first deterministic strategyproof budget-feasible mechanism that achieves a constant approximation for monotone submodular valuation functions  in polynomial time, and we achieve this with a clock auction. At the core of our clock auction is a novel backward greedy technique for maximizing submodular functions that iteratively eliminates bidders from consideration.

Our clock auction, called \mechone,  proceeds in phases that iteratively eliminate bidders. In each phase $t$ we aim to find a set of bidders $\Scurr$ with value at least  $\gOPT$, where $\gOPT$ is initially a low,  rough estimate of the optimal value $\OPT$ that is then gradually increased and  refined. At each phase, the mechanism iteratively considers the remaining bidder $i$ with maximum marginal contribution $\marginal{\{i\}}{\Scurr}$ to $\Scurr$. It then offers price $p_i = \min\{p_i,~ \marginal{\{i\}}{S_{t}} \cdot \frac{B}{\gOPT}\}$ to bidder $i$, which is the minimum of the last price offered to bidder $i$ and the marginal contribution of $i$ to the bidders $\Scurr$, scaled in order to reach the target value $\gOPT$ with budget $B$. If bidder $i$ accepts, the mechanism  adds $i$ to $\Scurr$, otherwise it eliminates $i$ from the set of active bidders $A$. Phase $t$ terminates either when $v(\Scurr)\geq \gOPT$, or when there are no more bidders to offer a price to, i.e., $A \setminus (S_{t-1} \cup S_{t})=\emptyset$.

At the beginning of a new phase $t>1$, the target $\gOPT_t$ is updated to be two times the previous target $\gOPT_{t-1}$. We set aside $\Sprev$, the bidders who accepted the price they were offered in the previous phase, $t-1$.
If  all the active bidders are either in $\Sprev$ or $\Scurr$ at the end of a phase $t$, then $t$ is the last phase of the mechanism. We implement sets in our mechanisms as lists that maintain the order in which the bidders were added to them.  We say that the \emph{prefix} of length $k$ of set $S$ is the subset comprising bidders from the first to the $k$-th index of the list representing $S$.
\vspace{7pt}

\begin{algorithm}[H]
\setstretch{1.1}
\SetKwInOut{Input}{Input}
\Input{Budget $B$,   valuation function $v: 2^N \rightarrow \mathbb{R}$}
 initialize  $A \leftarrow \bidders$, $S_{0} \leftarrow \emptyset, S_1 \leftarrow \left\{\text{argmax}_{i \in \bidders} \ \vali\right\}$,  $\gOPT \leftarrow v(S_1)$, $t \leftarrow 1$,  $p_i \leftarrow B$ for all $i \in \bidders$
 
\While{$A \setminus \left(S_{t-1} \cup S_{t}\right) \neq \emptyset$}{

update $t \leftarrow t + 1$, $\gOPT \leftarrow 2\gOPT$ and initialize $S_t \leftarrow \emptyset$ 
 \tcp*{start  a new phase} \label{algline:startouter}
 
 \While{$v(S_{t}) < \gOPT$ and $A \setminus \left(S_{t-1} \cup S_{t}\right) \neq \emptyset$}{
 
    let $i \leftarrow \text{argmax}_{i \in A \setminus \left(S_{t-1} \cup S_{t}\right)}\marginal{\{i\}}{ S_{t}}$ \;
  
  update $p_i \leftarrow \min\left\{p_i,~ \marginal{\{i\}}{S_{t}} \cdot \frac{B}{\gOPT}\right\}$
  
	\If{bidder $i$  accepts price $p_i$}{
	 update $S_{t} \leftarrow S_{t} \cup \{i\}$ \tcp*{add bidder $i$ to current solution}
	 
	}
	\Else{
	
		update $A \leftarrow A \setminus \{i\}$ \tcp*{permanently eliminate bidder $i$} \label{algline:endouter}
	}
 }
 }
Let $\bfirstset\leftarrow S_{t-1}$ and $\bsecondset\leftarrow S_t$ \label{algline:prunestart}\\
\If(\tcp*[f]{enforce budget feasibility of $\bfirstset$}){$\sum_{i \in \bfirstset} p_i > B$}{
    let $\lastbidder \leftarrow$ the last bidder added to $S_{t-1}$\\
    update $p_{\lastbidder} \leftarrow \min\{p_{\lastbidder},~ \marginal{\{\lastbidder\}}{S_{t}}\cdot \frac{B}{\gOPT_t}\}$ \label{algline:newprice}\\
    update $\bfirstset \leftarrow \bfirstset \setminus \{\lastbidder\}$\\
    \If{bidder $\lastbidder$  accepts price $p_{\lastbidder}$}{
	 update $\bsecondset \leftarrow \bsecondset \cup \{\lastbidder\}$ \tcp*{move the last bidder $\lastbidder$ to $\bsecondset$} \label{algline:pruneend}
	}
}

\Return{ \textsc{Maximize-Value}($\bfirstset, \bsecondset, p$)} 
 \caption{\mechone, a deterministic  budget-feasible clock auction  for monotone submodular valuation functions}
 \label{alg:4.75apx}
\end{algorithm}

\vspace{3pt}

\renewcommand{\algorithmcfname}{ALGORITHM}

\begin{algorithm}[H]
\setstretch{1.1}
\SetKwInOut{Input}{Input}
\Input{$\bfirstset$, $\bsecondset$ 
and the prices $p_i$ for all $i \in \firstset \cup \bsecondset$ }
let $\secondset \leftarrow$ the longest budget-feasible prefix of $\bsecondset$\\
let $\thirdset \leftarrow$ $\secondset \cup T$, where $T$ is the longest prefix of $\firstset$ such that $\secondset\cup T$ is budget-feasible\\
Let $W \in \{\bfirstset, \thirdset\}$ be the set with the largest value $v(W)$\\
\Return{ $W$ and the corresponding prices}
 \caption{\textsc{Maximize-Value}, an algorithm for maximizing value subject to knapsack constraint}
 \label{alg:Cmerging}
\end{algorithm}
\vspace{5pt}

\renewcommand{\algorithmcfname}{MECHANISM}

After the last phase has concluded, the mechanism lets $\bfirstset$ and $\bsecondset$ denote the sets generated during the last two phases. If $\bfirstset$ is not budget feasible based on the latest prices, the last bidder added to it is removed. That bidder is offered a (weakly) lower price, and if the bidder accepts that lower price it is added to $\bsecondset$. 

At this point, the prices are finalized and what remains is to choose a subset $W$ of active bidders that is budget-feasible (with respect to the final prices), aiming to maximize $v(W)$. Maximizing a submodular function subject to a knapsack constraint is hard to approximate beyond $1-1/e$~\cite{feige1998threshold,sviridenko2004note}, but we show that the \textsc{Maximize-Value} algorithm achieves the desired approximation by just choosing the best out of two simple candidates: i) the set $\firstset$ and ii) the set $\thirdset$. The set $\thirdset$ contains the longest budget-feasible prefix of $\bsecondset$, denoted $\secondset$, and then uses any leftover budget to also hire the longest possible prefix of bidders from $\firstset$ that this leftover budget can buy. Note that, since the prices have been finalized, one can actually replace the call to \textsc{Maximize-Value} with their favorite algorithm for submodular maximization or, even better, just use that algorithm within \textsc{Maximize-Value} to determine another budget-feasible candidate set $W_4$ and just return the set from $\{\firstset,\thirdset, W_4\}$ that gives the highest value for each instance. What is particularly exciting about clock auctions is that one can just plug in any algorithm that they like without affecting the incentives and appealing properties of the auction, which is unlike most other auction formats.

Our main result for this section is that, apart from being a deterministic clock auction, \mechone\ also achieves the best-known approximation for monotone submodular valuations.

\begin{theorem}\label{thm:4.75}
Let $v$ be a monotone submodular valuation function, then
\mechone\ is a polynomial-time deterministic budget-feasible clock auction that achieves a 4.75 approximation.
\end{theorem}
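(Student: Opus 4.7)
The first step is to set up structural notation about the terminal state of \mechone. Write $\hat{t}$ for the final phase, $\gOPT^* := \gOPT_{\hat{t}}$ for the final target, and $v^* := v(S_{\hat{t}-1})$. Two invariants follow immediately from the outer loop's semantics: (i) $A \setminus (S_{\hat{t}-1} \cup S_{\hat{t}}) = \emptyset$ at termination, so every still-active bidder lies in $S_{\hat{t}-1} \cup S_{\hat{t}}$; and (ii) phase $\hat{t}$ was reached only because phase $\hat{t}-1$'s inner loop terminated with $v^* \geq \gOPT_{\hat{t}-1} = \gOPT^*/2$. These observations also give polynomial running time: since $\gOPT_1 = v(S_1) = \max_i v(\{i\}) \geq v(\bidders)/n$ by subadditivity, at most $O(\log n)$ phases occur before $\gOPT$ exceeds $v(\bidders)$ and the outer loop exits, and each phase's work is polynomial.

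The second step is to upper-bound $\OPT$. I would partition $\mathcal{O} = \mathcal{O}_A \cup \mathcal{O}_E$, with $\mathcal{O}_A := \mathcal{O} \cap (S_{\hat{t}-1} \cup S_{\hat{t}})$ and $\mathcal{O}_E$ the eliminated optimal bidders. Monotonicity and subadditivity yield $v(\mathcal{O}_A) \leq v(S_{\hat{t}-1}) + v(S_{\hat{t}})$. For $\mathcal{O}_E$, the key is the rejection inequality: an $i \in \mathcal{O}_E$ eliminated in phase $t_i$ with current set $S^{(t_i)}$ has $c_i \leq B$ (so it cannot have rejected the ceiling $B$), meaning the marginal-based term was binding and $\marginal{\{i\}}{S^{(t_i)}} < c_i \gOPT_{t_i}/B \leq c_i \gOPT^*/B$. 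Summing over $\mathcal{O}_E$ and using $\sum_{i \in \mathcal{O}} c_i \leq B$ yields $\sum_{i \in \mathcal{O}_E} \marginal{\{i\}}{S^{(t_i)}} < \gOPT^*$. A submodular telescoping argument then converts this into $v(\mathcal{O}_E \cup X) - v(X) \leq \gOPT^*$ for an appropriate $X$; bidders eliminated in the last two phases use $X = S_{\hat{t}-1} \cup S_{\hat{t}}$, and earlier-phase contributions are bounded separately using the geometric decay $\gOPT_t \leq \gOPT^*/2^{\hat{t}-t}$. Combining with the bound on $v(\mathcal{O}_A)$ gives $\OPT \leq v^* + v(S_{\hat{t}}) + O(\gOPT^*) = O(v^*) + v(S_{\hat{t}})$, where the last equality uses $\gOPT^* \leq 2 v^*$.

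The third step is to lower-bound $\max(v(\bfirstset), v(\thirdset))$. If the pruning step does not trigger, $v(\bfirstset) = v^*$; otherwise the removed bidder $\lastbidder$ has marginal contribution controlled by the pricing rule and may rejoin $\bsecondset$ at its reduced price. For $\thirdset = \secondset \cup T$, I would split by whether $v(S_{\hat{t}}) \geq \gOPT^*$. If $v(S_{\hat{t}}) < \gOPT^*$, the pricing rule forces $\sum_{i \in \bsecondset} p_i \leq (B/\gOPT^*) v(S_{\hat{t}}) < B$, so $\bsecondset$ is itself budget-feasible, $\secondset = \bsecondset$, and $T$ absorbs a further prefix of $\bfirstset$. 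If instead $v(S_{\hat{t}}) \geq \gOPT^*$, the first bidder $i^+$ excluded from $\secondset$ satisfies $v(\secondset \cup \{i^+\}) > \gOPT^*$ by the pricing rule, and submodularity gives $v(\secondset) > \gOPT^* - \gOPT_1$. Plugging these lower bounds into the bound on $\OPT$ from the second step and taking the maximum over the two candidates produces the approximation ratio $4.75$ via case analysis.

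The primary obstacle is driving the constant down to exactly $4.75$. The proof has three independent sources of slack --- the possible overshoot of $v^*$ past $\gOPT^*/2$, the pruning of $\lastbidder$ from $\bfirstset$, and the prefix loss in $\secondset$ --- and an adversarial instance can saturate all of them simultaneously. The doubling schedule is precisely tuned to balance these sources: slower growth shifts more optimal bidders into $\mathcal{O}_E$ and weakens the $v(\mathcal{O}_E)$ bound, while faster growth inflates $\gOPT^*/v^*$ and so weakens the guarantee on $\bfirstset$. Getting to exactly $4.75$, rather than a larger constant, requires carefully exploiting both candidate sets returned by \textsc{Maximize-Value} in a weighted inequality rather than a naive $\max$, so that losses from one candidate are partially compensated by the other.
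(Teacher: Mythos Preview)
Your plan captures the right high-level ingredients (the rejection inequality, the geometric decay of early-phase targets, and the need to play the two candidate outputs against each other), but as written it will not reach $4.75$; a direct execution would give a larger constant. There are two concrete gaps.

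\textbf{First, the subadditive split in step~2 loses the quantity you need.} You propose bounding $v(\mathcal{O}_A)\le v(S_{\hat t-1})+v(S_{\hat t})$ and then carrying $v^*$ and $v(S_{\hat t})$ separately. The paper instead keeps the benchmark in the form
\[
\OPT \;\le\; v(W_1\cup W_2)\;+\;v(R\mid W_1\cup W_2),
\]
treating $v(W_1\cup W_2)$ as a single unknown throughout. This matters because the decisive lower bound on $W_3$ is obtained via submodularity as
\[
v(W_3)\;\ge\; v(W_2\mid W_1)+v(T)\;=\;v(W_1\cup W_2)-v(W_1)+v(T),
\]
so the hypothesis ``$v(W_3)<\tfrac{1}{\rho}\,\textsc{Benchmark}$'' translates into an \emph{upper} bound on $v(W_1\cup W_2)/\gOPT_{\hat t}$. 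Once you have replaced $v(W_1\cup W_2)$ by $v^*+v(S_{\hat t})$, this manoeuvre is no longer available, and the resulting ratio is strictly worse than $4.75$.

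\textbf{Second, the bound $v(\mathcal{O}_E\mid X)\le\gOPT^*$ is too strong.} Your telescoping idea works cleanly only for bidders rejected in the last two phases, where the reference set $S^{(t_i)}$ is indeed a subset of $W_1$ or of $\bar W_2$. For a bidder rejected in some earlier phase $t\le\hat t-2$, the reference set sits inside $S_t$, which need not be contained in $W_1\cup\bar W_2$ at all, so submodularity does not give you $v(\{i\}\mid W_1\cup\bar W_2)\le v(\{i\}\mid S^{(t_i)})$. The paper handles this by bounding the \emph{full} value $v(R_a)$ of the early-rejected set via $v(R_a)\le v(R_a\mid\mathbf S_{\hat t-2})+v(\mathbf S_{\hat t-2})$ and paying the geometric sum $v(\mathbf S_{\hat t-2})\le\sum_{t\le\hat t-2}\gOPT_t$. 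The net bound is $v(R\mid W_1\cup\bar W_2)\le\bigl(\tfrac32-\tfrac{1}{2^{\hat t-2}}\bigr)\gOPT_{\hat t}$, not $\gOPT_{\hat t}$.

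The paper's actual endgame is a proof by contradiction: assume $v(W_1),v(W_3)<\tfrac{1}{\rho}\,\textsc{Benchmark}$, introduce $\alpha=v(W_1)/\gOPT_{\hat t}$, $\beta=v(R\mid W_1\cup W_2)/\gOPT_{\hat t}$, and $u$ the fraction of budget left unused by $W_2$, and derive two lower bounds and one upper bound on $v(W_1\cup W_2)/\gOPT_{\hat t}$. These three linear constraints (together with $\alpha\ge\tfrac{2^{\hat t-2}}{2^{\hat t-1}+2}$ from the one-bidder pruning bound and $\beta\le\tfrac32-\tfrac{1}{2^{\hat t-2}}$) are shown to be jointly infeasible for all $\rho\ge 4.75$. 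Your closing paragraph correctly intuits that a ``weighted inequality'' is needed; this contradiction setup, with $v(W_1\cup W_2)$ as the pivot variable, is the specific mechanism.
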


The fact that \mechone\ is budget feasible is easy to verify since the sets considered by \textsc{Maximize-Value} are budget feasible by design, based on the final prices (see Appendix~\ref{subsec:budegtfeasible}). Due to space limitations we also defer the argument that its running time is $O(n^2\log n)$ to Appendix~\ref{subsec:polytimemonotone}, and instead focus on the more challenging argument for proving the approximation factor.

Let $\hat{t}$ denote the last phase before the mechanism terminates and let $\gOPT_t$ denote the target value $\gOPT$ of each phase $t$. Also, let $\mathbf{Q}=\{Q\subseteq \bidders\setminus (\bfirstset\cup \bsecondset)~:~\sum_{i\in Q}c_i \leq B\}$ be the collection of all budget-feasible subsets of bidders in $\bidders\setminus (\bfirstset\cup \bsecondset)$ (the rejected bidders), and let $\R = \argmax_{Q\in \mathbf{Q}}v(Q~|~\bfirstset\cup \bsecondset)$ be the set in $\mathbf{Q}$ that adds the largest marginal value to $\bfirstset \cup \bsecondset$.

To prove the approximation factor, we later argue that $\OPT \leq v(\firstset \cup \bsecondset) + \marginal{R}{\firstset \cup \bsecondset}$, and show that our auction achieves a 4.75-approximation of the benchmark on the right hand side. The following lemma plays a central role in this argument, as it provides an upper bound on the portion of the optimal value lost through rejections, i.e., the second term of the benchmark.

\begin{lemma}\label{lem:main}For any monotone submodular valuation function $v$, if $\hat{t}$ is the last phase of \mechone, we have 
\[\marginal{R}{\firstset \cup \bsecondset} \leq \left(\frac{3}{2}~-~\frac{1}{2^{\hat{t}-2}}\right)\cdot \gOPT_{\hat{t}}.\]
\end{lemma}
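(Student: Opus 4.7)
The strategy is to partition $R$ by the phase in which each bidder was rejected and exploit the pricing condition together with submodularity. For each $i\in R$, let $t_i$ denote the phase in which $i$ was eliminated and $S_{t_i}^{i}$ the snapshot of $S_{t_i}$ at the moment $i$ was offered its final price. Since that offer was $\min\{p_i^{\mathrm{prev}},\marginal{\{i\}}{S_{t_i}^{i}}\cdot B/\gOPT_{t_i}\}$ and $i$ rejected, we obtain the key inequality $\marginal{\{i\}}{S_{t_i}^{i}} < c_i\,\gOPT_{t_i}/B$. By the standard subadditive-marginals consequence of submodularity, $\marginal{R}{\firstset\cup\bsecondset}\leq \sum_{i\in R}\marginal{\{i\}}{\firstset\cup\bsecondset}$, so the task reduces to bounding each individual marginal at the final set.

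\textbf{Late rejections ($t_i\in\{\hat{t}-1,\hat{t}\}$).} For these bidders the snapshot $S_{t_i}^{i}$ lies entirely inside $\firstset\cup\bsecondset$ (the bidder $\lastbidder$ possibly relocated by the pruning step is handled by reapplying the rejection inequality with its tightened price on line~\ref{algline:newprice} of the mechanism). Submodularity then upgrades the key inequality to $\marginal{\{i\}}{\firstset\cup\bsecondset}\leq \marginal{\{i\}}{S_{t_i}^{i}}< c_i\gOPT_{t_i}/B$. Summing this across phases $\hat{t}-1$ and $\hat{t}$ and maximizing the linear functional $\sum_{i\in R_{\hat{t}-1}}c_i(\gOPT_{\hat{t}}/2)/B+\sum_{i\in R_{\hat{t}}}c_i\gOPT_{\hat{t}}/B$ subject to the global constraint $\sum_{i\in R}c_i\leq B$ contributes the $\gOPT_{\hat{t}}$ portion of the bound.

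\textbf{Early rejections and combination.} The technical crux is controlling the contribution of bidders with $t_i\leq \hat{t}-2$, for which $S_{t_i}^{i}\not\subseteq \firstset\cup\bsecondset$ in general and so the submodular comparison above fails. My plan is an induction on $\hat{t}$: after the first phase, the mechanism's behavior restricted to its still-active bidders mirrors a run with a strictly smaller number of phases, and the inductive hypothesis bounds exactly the "early" contribution. The geometric doubling $\gOPT_{t+1}=2\gOPT_t$ causes the per-phase bounds to telescope, while the fact that $S_1$ is the singleton argmax yields the uniform cap $v(\{i\})\leq \gOPT_1$ that absorbs the small constant correction. Together, I expect the early contributions to sum to at most $(\tfrac{1}{2}-\tfrac{1}{2^{\hat{t}-2}})\gOPT_{\hat{t}}$; added to the $\gOPT_{\hat{t}}$ bound from the late phases this delivers the stated $(\tfrac{3}{2}-\tfrac{1}{2^{\hat{t}-2}})\gOPT_{\hat{t}}$.

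The hard part will be making the inductive/amortization step precise: one must track which bidders originally accepted in an early $S_{t_i}$ end up in $\firstset\cup\bsecondset$ and which are rejected later, apply the key inequality at the correct snapshot, and make sure the global budget constraint $\sum_{i\in R}c_i\leq B$ is used only once across the entire accounting (rather than per phase) so that both the tight late-phase term and the correction term come out simultaneously.
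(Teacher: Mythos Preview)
Your treatment of the late rejections (phases $\hat t-1$ and $\hat t$, including the special handling of $\lastbidder$) is correct and matches the paper: the snapshots there are subsets of $\firstset\cup\bsecondset$, so submodularity plus the rejection inequality give $\marginal{\{i\}}{\firstset\cup\bsecondset}<c_i\,\gOPT_{\hat t}/B$, and summing over the late-rejected bidders with total cost $f_b B$ contributes $f_b\,\gOPT_{\hat t}$.

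The gap is in the early-phase step. The inductive structure you invoke does not exist in this mechanism: after phase $t$ the bidders in $S_{t-1}$ become available again in phase $t+1$, prices carry state across phases, and the initial $\gOPT$ of a hypothetical ``sub-run'' would not be the max singleton. More fundamentally, even if you could set up an induction, the quantity you must control is the marginal of the early-rejected set to the \emph{final} pair $\firstset\cup\bsecondset$, and no early snapshot is contained in that pair, so the rejection inequality cannot be upgraded by submodularity in the way you did for the late phases. Your plan does not supply any device to bridge this gap.

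The paper's argument for the early part $R_a$ (bidders rejected in phases $\le \hat t-2$, with total cost $f_a B$) is different and direct rather than inductive. It abandons the attempt to compare to $\firstset\cup\bsecondset$ and instead uses monotonicity once: $\marginal{R_a}{\firstset\cup\bsecondset}\le v(R_a)$. To bound $v(R_a)$ it introduces an auxiliary set $\mathbf S_{\hat t-2}=\bigcup_{t=2}^{\hat t-2}S_t^-$ (the union of the budget-feasible prefixes from early phases). Every early snapshot is contained in $\mathbf S_{\hat t-2}$, so submodularity plus the rejection inequality give $\marginal{R_a}{\mathbf S_{\hat t-2}}\le f_a\,\gOPT_{\hat t-2}$, while $v(\mathbf S_{\hat t-2})\le\sum_{t=2}^{\hat t-2}\gOPT_t$ is the geometric sum you anticipated. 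This yields $v(R_a)\le\bigl(\tfrac{f_a}{4}+\tfrac12-\tfrac{1}{2^{\hat t-2}}\bigr)\gOPT_{\hat t}$. Finally, the budget is used exactly once, as you correctly flagged: one maximizes $f_b+\tfrac{f_a}{4}+\tfrac12-\tfrac{1}{2^{\hat t-2}}$ over $f_a+f_b\le 1$, which is attained at $f_a=0$ and gives $\tfrac32-\tfrac{1}{2^{\hat t-2}}$. The two ideas you were missing are the monotonicity step $\marginal{R_a}{\firstset\cup\bsecondset}\le v(R_a)$ and the auxiliary union $\mathbf S_{\hat t-2}$ that makes all early snapshots comparable.
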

\begin{proof}

We denote the last bidder that is added to  $S_{\hat{t}-1}$ (who is offered a new price in line \ref{algline:newprice}) as $\lastbidder$.  We then partition $\R$ into $\R_a$ and $\R_b$ where $\R_a$ consists of the bidders of $\R$ that were rejected in the first $\hat{t}-2$ phases and $\R_b$ consists of bidders in $\R$ rejected in phases $\hat{t} - 1$ and $\hat{t}$ and $\lastbidder$ if $\lastbidder \in \R$. We say bidders in $\R_a$ have total cost $f_a \cdot B$ and those in $\R_b$ have total cost $f_b \cdot B$ where $f_a+f_b\le 1$. 

For any $j \in \R_b$ that is rejected in phase $\hat{t}-1$, let $T_j \subseteq S_{\hat{t}-1}$ be the subset when $j$ is rejected. Note that $T_j$ does not include $\lastbidder$ since $j$ is considered before the addition of $\lastbidder$, so $T_j\subseteq \firstset$, and we have:
\[\marginal{\{j\}}{\firstset \cup \bsecondset} \leq \marginal{\{j\}}{\firstset} \leq \marginal{\{j\}}{T_{j}} \leq \gOPT_{\hat{t}-1} \frac{c_j}{B} \leq \gOPT_{\hat{t}}\frac{c_j}{B}.\]
where the first two inequalities are by submodularity. Similarly, for any $j \in R_b$ that is rejected in phase $\hat{t}$, we have:
\[\marginal{\{j\}}{\firstset \cup \bsecondset} \leq \marginal{\{j\}}{\bsecondset} \leq \marginal{\{j\}}{T_{j}}  \leq \gOPT_{\hat{t}}\frac{c_j}{B}.\]
If the last bidder $\lastbidder$ of $\firstset$ is rejected to make $\firstset$ budget feasible (in line \ref{algline:newprice}), it must be that it rejected the new price the mechanism offers, i.e.,

\[\marginal{\{j^*\}}{\firstset \cup \bsecondset} \leq \marginal{\{j^*\}}{\bsecondset}\leq \gOPT_{\hat{t}}\frac{c_{j^*}}{B}.\]

Since the sum of the costs of the bidders in $R_b$ is equal to $f_b\leq 1$ fraction of budget, i.e., $\sum_{j \in \R_b}c_j =f_bB$. we have:
\[\marginal{\R_b}{\firstset \cup \bsecondset} \leq \sum_{j \in \R_b} \marginal{\{j\}}{\firstset \cup \bsecondset}\leq \sum_{j \in \R_b}\gOPT_{\hat{t}}\frac{c_j}{B} \leq f_b \gOPT_{\hat{t}}. \]

Now let $S^-_t$ denote the longest budget-feasible prefix of $S_t$. We then have $v(S_t^-) \leq \gOPT_t$ for all $t$. We also define $\mathbf{S}_{\hat{t}-2} = \bigcup_{t=2}^{\hat{t}-2}S_t^-$.
For any bidder $j \in \R_a$, by the definition of the mechanism we have $\costi[j] > \marginal{\{j\}}{T_{k}} \cdot  \frac{B}{\gOPT_{k}}$ for some $T_{k} \subseteq S^-_k$ where $2 \leq k \leq \hat{t} - 2$ is the phase where $j$ was rejected. Notice that we don't include $S_1$ since no bidder can be rejected in phase 1. By submodularity, we have  $\marginal{\{j\}}{T_{k}} \geq \marginal{\{j\}}{\mathbf{S}_{\hat{t} - 2}}$. Together with the fact that $\gOPT_{k} \leq \gOPT_{\hat{t}-2}$ for all $k \leq \hat{t}-2$, we get that for all $j \in \R_a$, $\marginal{\{j\}}{\mathbf{S}_{\hat{t}-2}} \leq \gOPT_{\hat{t}-2} \cdot \frac{\costi[i]}{B}.$
Similarly, since the sum of the costs of the bidders in $R_a$ equals to  $f_a$ fraction of the budget, i.e., $\sum_{j \in R_a}c_i = f_aB$, 
\begin{align*}
    \marginal{\R_a}{\mathbf{S}_{\hat{t}-2}} \leq \sum_{j \in \R_a}\marginal{\{j\}}{\mathbf{S}_{\hat{t}-2}}  \leq \sum_{j \in \R_a}\gOPT_{\hat{t}-2} \cdot \frac{\costi[j]}{B} \leq f_a\gOPT_{\hat{t}-2}.
\end{align*}

Recall that $\mathbf{S}_{\hat{t}-2} = \bigcup_{t=2}^{\hat{t}-2}S^-_{t}$, by submodularity we have $v(\mathbf{S}_{\hat{t}-2}) \leq \sum_{t=2}^{\hat{t}-2}v(S^-_t) \leq \sum_{t = 2}^{\hat{t}-2}\gOPT_{t} = \sum_{t=4}^{\hat{t}}\gOPT_{t-2}$ . By monotonicity, 
\begin{align*}
v(\R_a) \leq \marginal{\R_a}{\mathbf{S}_{\hat{t}-2}} + v(\mathbf{S}_{\hat{t}-2}) &\leq f_a\gOPT_{\hat{t}-2} + \sum_{t =4}^{\hat{t}}\gOPT_{t-2} \leq \left(f_a+2-\frac{1}{2^{\hat{t}-4}}\right)\gOPT_{\hat{t}-2} ~~\Rightarrow\\
v(\R_a) &\leq \left(\frac{f_a}{4} +\frac{1}{2}- \frac{1}{2^{\hat{t}-2}}\right)\gOPT_{\hat{t}}
\end{align*}
Combining the analysis of $\R_a$ and $\R_b$, we have:
\begin{align*}
    \marginal{\R}{\firstset \cup \bsecondset} &\leq \marginal{\R_a}{\firstset \cup \bsecondset}+\marginal{\R_b}{\firstset \cup \bsecondset}  \leq v(\R_a) + \marginal{\R_b}{\firstset \cup \bsecondset} ~\Rightarrow\\
    \marginal{\R}{\firstset \cup \bsecondset} & \leq \max_{f_a,f_b~:~f_a+f_b=1}\left(f_b+\frac{f_a}{4}+\frac{1}{2}-\frac{1}{2^{\hat{t}-2}}\right)\gOPT_{\hat{t}} \leq \left(\frac{3}{2}-\frac{1}{2^{\hat{t}-2}}\right) \gOPT_{\hat{t}}.\qedhere
\end{align*}
\end{proof}

Next, we bound the loss in value from potentially discarding the last bidder added to $S_t$ in order to ensure budget feasibility. 

\begin{lemma}
\label{lem:lastelement} Assume that $v$ is a submodular valuation function and
  let  $i_t$ denote the final bidder added to $S_{t}$ in phase $t$. For all $t \geq 2$, if $S_t$ is not budget feasible  $$v(S_t \setminus \{i_t\}) \geq \frac{2^{t-1}}{2^{t-1}+1} \cdot \gOPT_t.$$ 
\end{lemma}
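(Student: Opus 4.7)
The plan is to combine three observations to prove $v(S_t\setminus\{i_t\}) \geq \frac{2^{t-1}}{2^{t-1}+1}\gOPT_t$. Let $i_1,\ldots,i_k=i_t$ denote the bidders accepted in phase $t$ in the order added, and let $m_j := \marginal{\{i_j\}}{\{i_1,\ldots,i_{j-1}\}}$, so that $v(S_t)=\sum_{j=1}^k m_j$ and $v(S_t\setminus\{i_t\}) = v(S_t) - m_k$.

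The first observation is that the marginals are non-increasing: $m_1 \geq m_2 \geq \cdots \geq m_k$. When $i_{j+1}$ was accepted, it had the largest marginal relative to $\{i_1,\ldots,i_j\}$ among the bidders still in $A\setminus(S_{t-1}\cup S_t)$. But $i_{j+1}$ was already available at the earlier iteration that accepted $i_j$, since intermediate rejections only shrink $A$ without altering $S_t$; so by the max-marginal rule at that iteration, $\marginal{\{i_{j+1}\}}{\{i_1,\ldots,i_{j-1}\}} \leq m_j$, and submodularity then gives $m_{j+1} \leq m_j$. Because $m_k$ is therefore at most the average marginal, $m_k \leq v(S_t)/k$, and hence $v(S_t\setminus\{i_t\}) \geq v(S_t)(1-1/k)$.

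The remaining two ingredients are quantitative bounds on $v(S_t)$ and on $k$. Because the mechanism sets $p_{i_j}\leq m_j\cdot B/\gOPT_t$, summing and invoking non-budget-feasibility $\sum_j p_{i_j}>B$ gives $v(S_t)\cdot B/\gOPT_t > B$, i.e., $v(S_t) > \gOPT_t$. On the other hand, submodularity together with $\gOPT_1 = v(S_1) = \max_i v(\{i\})$ implies $m_j \leq v(\{i_j\}) \leq \gOPT_1 = \gOPT_t/2^{t-1}$, so $v(S_t) \leq k\gOPT_t/2^{t-1}$; combined with $v(S_t)>\gOPT_t$, this forces $k > 2^{t-1}$, hence $k\geq 2^{t-1}+1$. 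Putting it all together,
\[ v(S_t\setminus\{i_t\}) \geq v(S_t)\left(1-\tfrac{1}{k}\right) \geq \gOPT_t\cdot\frac{k-1}{k} \geq \gOPT_t\cdot\frac{2^{t-1}}{2^{t-1}+1}, \]
using that $1-1/k$ is increasing in $k$ together with $v(S_t)\geq \gOPT_t$.

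I anticipate no real obstacle. The only subtlety is the non-increasing marginals claim in the presence of interleaved rejections, which is cleanly handled by noting that a rejection removes a bidder from $A$ but leaves $S_t$—and hence the marginals of the remaining candidates—unchanged, so the max-marginal comparison between successive acceptances $i_j$ and $i_{j+1}$ still goes through.
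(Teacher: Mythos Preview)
Your proof is correct and follows essentially the same approach as the paper: bound each marginal by $\gOPT_t/2^{t-1}$ to force $k\geq 2^{t-1}+1$, observe the marginals are non-increasing so $m_k\leq v(S_t)/k$, and combine. If anything, your version is more careful than the paper's: you explicitly derive the strict inequality $v(S_t)>\gOPT_t$ from the non-budget-feasibility hypothesis (via $\sum_j p_{i_j}\leq v(S_t)\,B/\gOPT_t$), whereas the paper simply asserts ``to strictly exceed the target'' without spelling out why the hypothesis guarantees this; you also handle the interleaved-rejections subtlety for the monotone marginals cleanly.
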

\begin{proof}
Observe that for every bidder $i$ we have $v(\{i\})\leq \gOPT_1$ (by definition of $\gOPT_1$), so $v(\{i\})\leq \frac{\gOPT_{t}}{2^{t-1}}$ for every $t \geq 1$.  Thus, to strictly exceed the target $\gOPT_t$ in any phase $t \geq 2$ we must add at least $2^{t-1}+1$ bidders to the set $S_{t}$.  But since our algorithm considers bidders in weakly decreasing order of marginal contribution, by submodularity we then have that $\marginal{\{i_t\}}{S_t \setminus \{i_t\}} \leq \frac{1}{2^{t-1}+1} \cdot v(S_t)$. Consequently, $v(S_t \setminus \{i_t\}) \geq \frac{2^{t-1}}{2^{t-1}+1} \cdot S_t \geq \frac{2^{t-1}}{2^{t-1}+1} \cdot \gOPT_t$. 
\end{proof}

\begin{proof}[Proof Sketch for Theorem~\ref{thm:4.75}]
\mechone\ is clearly deterministic. Next, note that the sequence of prices offered to a bidder $i$ is descending since at each update of $p_i$, it is the minimum of the previous price $p_i$ and another price. Moreover, once a bidder rejects a price, it exits the auction and is not considered anymore. Thus, \mechone\ is a clock-auction. 

Throughout the proof, we assume $\hat{t}\geq 3$ and $\bsecondset$ is budget-feasible, i.e., $\bsecondset = \secondset$. We show our auction actually achieves a better approximation in the cases where $\hat{t}< 3$ or $\bsecondset$ is not budget feasible 
in Appendices~\ref{subsec:teq2} and \ref{subsec:W2budgetfeasible}, respectively.

Let $\firstset, \secondset, $ and $\thirdset$ denote the sets defined in the \textsc{Maximize-Value} algorithm. We use $\textsc{Benchmark}$ to refer to the value of $v(\firstset\cup \secondset) + \marginal{\R}{\firstset\cup \secondset}$, with the assumption $\secondset = \bsecondset$.  By submodularlity and monotonicity, and since the optimal solution needs to be budget feasible, we have that $\OPT \leq v(\firstset \cup \secondset) + \marginal{R}{\firstset \cup \secondset}.$
 Then, to prove that \mechone\ gives a $\rho$ approximation it is sufficient to show that

\begin{equation*}
\frac{v(\firstset\cup \secondset) + \marginal{\R}{\firstset \cup \secondset}}{\max\{v(\firstset),  v(\thirdset)\}} ~\leq~ \rho. \end{equation*}

Assume, for contradiction, the negation of the above inequality holds true, then it must be that $v(\firstset),v(\thirdset)$ both have value less than $\frac{1}{\rho}$ times \textsc{Benchmark}. We show that for any $\rho \geq 4.75$ this assumption leads to a contradiction. For notational simplicity, we use $\firstconstant$ and $\rconstant$ to denote the constants for which $v(\firstset) = \firstconstant \gOPT_{\hat{t}}$ and $\marginal{\R}{\firstset\cup \secondset} = \rconstant\gOPT_{\hat{t}}$.

\textbullet~ First, from the fact that $v(\firstset)$ is strictly less than $\frac{1}{\rho}$ of the \textsc{Benchmark}, we get
\begin{equation}\label{eq:cons1}
v(\firstset)= \firstconstant\gOPT_{\hat{t}} < \frac{1}{\rho} (v(\combinedset) + \marginal{\R_b}{\combinedset}) ~\Rightarrow~
\frac{v(\combinedset)}{\gOPT_{\hat{t}}} > \left(\rho\firstconstant-\rconstant\right).
\end{equation}

\textbullet~ Then, since $v(\thirdset)$ is strictly less than $\frac{1}{\rho}$ of the \textsc{Benchmark}, and $v(\thirdset)\geq v(\secondset)$, we get

\begin{equation}\label{eq:W2valueBound2}
v(\secondset) \leq v(\thirdset)  <\frac{1}{\rho}\left(v(\combinedset) + \rconstant\gOPT_{\hat{t}}\right) ~\Rightarrow~ v(\combinedset) > \rho v(\secondset)-\rconstant\gOPT_{\hat{t}}.
\end{equation}
The marginal contribution of each bidder $i\in \secondset$ in the order that they were added is at least $\frac{p_i \gOPT_{\hat{t}}}{B}$  so $v(\secondset) \geq \frac{ \gOPT_{\hat{t}}}{B} \sum_{i\in \secondset} p_i$. Thus if we let $u=1-\frac{\sum_{i\in\secondset}p_i}{B}$ be the fraction of the budget left unused by $\secondset$, by Inequality~\eqref{eq:W2valueBound2} we have
\begin{equation}\label{eq:cons2}
v(\combinedset) > \rho (1-u) \gOPT_{\hat{t}} -\rconstant\gOPT_{\hat{t}} ~\Rightarrow~ \frac{v(\combinedset)}{\gOPT_{\hat{t}}} > \rho(1-u) - \rconstant 
\end{equation}

\textbullet~ Furthermore, for the value of $\thirdset$, using submodularity, we get:
\[v(\thirdset)= \marginal{\secondset}{T} +v(T) \geq \marginal{\secondset}{\firstset}+v(T) = v(\firstset \cup \secondset) - \firstconstant\gOPT_{\hat{t}} + v(T)
\]
Using the fact that $v(\thirdset)$ is less than $\frac{1}{\rho}$ of the \textsc{Benchmark} once again, we get
\begin{equation}\label{eq:W3Bound2}
v(\firstset \cup \secondset) -\firstconstant\gOPT_{\hat{t}} + v(T) ~<~
\frac{1}{\rho} (v(\firstset\cup \secondset) + \rconstant\gOPT_{\hat{t}})
\end{equation}
Also, note that for every bidder $i$ we have $v(\{i\})\leq \gOPT_1$ (by definition of $\gOPT_1$), so $v(\{i\})\leq \frac{\gOPT_{t}}{2^{t-1}}$ for every $t \geq 1$. 
Let $T'$ be the shortest prefix of $\firstset$ such that $\sum_{i\in T'}p_i>uB$, i.e., the prefix whose current prices exceed a $u$ fraction of the budget. As each of these bidders was added to $S_{\hat{t}-1}$ in phase $\hat{t}-1$, the ratio of their marginal contribution over the price that they were offered was at least $\frac{\gOPT_{\hat{t}-1}}{B}$, so their total value, $v(T')$ is at least $u\gOPT_{\hat{t}-1}$. If we remove the last bidder from $T'$, we retrieve the set $T$ (the longest prefix of $\firstset$ whose prices add up to at most $uB$ and, hence, can be afforded in addition to $\secondset$). Since that bidder's marginal contribution is at most $\frac{\gOPT_{\hat{t}-1}}{2^{\hat{t}-2}}$ the value of $T$ must be at least
\[v(T) ~\geq~ \left(u - \frac{1}{2^{\hat{t}-2}}\right)\gOPT_{\hat{t}-1} ~=~ \left(u - \frac{1}{2^{\hat{t}-2}}\right)\frac{\gOPT_{\hat{t}}}{2}.\]
Substituting this into~\eqref{eq:W3Bound2} gives 
\begin{align}
\left(1-\frac{1}{\rho}\right)v(\combinedset) &< \frac{\rconstant}{\rho}+\firstconstant-v(T) < \frac{\rconstant}{\rho}+\firstconstant-\left(u-\frac{1}{2^{\hat{t}-2}}\right)\frac{\gOPT_{\hat{t}}}{2} ~\Rightarrow \nonumber\\
\frac{v(\firstset\cup \secondset)}{\gOPT_{\hat{t}}} ~&<~ \frac{2\rho \firstconstant+2 \rconstant-\rho u+\frac{\rho}{2^{\hat{t}-2}}}{2\rho -2} \label{eq:cons3}
\end{align}

In summary, from the assumption that $v(\firstset)$, $v(\secondset)$, and $v(\thirdset)$ all have value less than $\frac{1}{\rho}$ times the \textsc{Benchmark}, we get Inequalities~\eqref{eq:cons1}, \eqref{eq:cons2} and \eqref{eq:cons3}, respectively. The first two inequalities yield lower bounds for the $v(\firstset\cup \secondset)/\gOPT_{\hat{t}}$ ratio, while the third one provides an upper bound. We prove that for any value $\rho\geq 4.75$ these three inequalities are incompatible, leading to a contradiction.

Due to space limitations, we defer the rest of this proof to Appendix~\ref{subsec:thm1}, where we first show that without loss of generality we can assume that $u=1-\firstconstant$. Then, using Lemma~\ref{lem:lastelement} we obtain that $\firstconstant \geq \frac{2^{\hat{t}-2}}{2^{\hat{t}-1}+2}$ and from Lemma~\ref{lem:main} we get that $\rconstant \leq \frac{3}{2}-\frac{1}{2^{\hat{t}-1}}$. Using these two inequalities we verify the incompatibility of Inequalities~\eqref{eq:cons1}, \eqref{eq:cons2} and \eqref{eq:cons3} for $\rho\geq 4.75$, concluding the proof.
\end{proof}

We complement our upper bound of $4.75$ for the approximation factor of \mechone\ with a lower bound of $4.5$. The proof is deferred to Appendix~\ref{subsec:lowerbound}.
\begin{lemma}\label{lem:lowerboundInstance} For any constant $\epsilon > 0$,
there exists a monotone submodular valuation function $v$ for which \mechone \ returns a solution $S$ such that $\OPT > (4.5-\epsilon)v(S)$.
\end{lemma}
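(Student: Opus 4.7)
The plan is to exhibit an explicit family of submodular instances indexed by a growing parameter $\hat{t}$, on which \mechone\ terminates after exactly $\hat{t}$ phases and both candidate sets considered inside \textsc{Maximize-Value} have value approaching $2\OPT/9$ as $\hat{t}\to\infty$. The target configuration is dictated by the tightness boundary of the analysis of Theorem~\ref{thm:4.75}: I want the last two phase-sets to satisfy $v(\firstset) = v(\bsecondset) = \gOPT_{\hat{t}}/2$ and $v(\firstset\cup\bsecondset) = 3\gOPT_{\hat{t}}/4$, while the rejected bidders from the optimum provide $\marginal{\R}{\firstset\cup\bsecondset} = (3/2)\gOPT_{\hat{t}}$. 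A direct calculation then yields $\OPT = (9/4)\gOPT_{\hat{t}}$ against a mechanism output of value $\gOPT_{\hat{t}}/2$, giving the ratio $9/2$.

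I will realize this structure with a weighted coverage valuation on a ground set partitioned into four disjoint blocks $E_A, E_B, E_C$ (each of total weight $\gOPT_{\hat{t}}/4$) and $E_R$ (of total weight $(3/2)\gOPT_{\hat{t}}$). The bidders come in four groups: (i) a single initializing bidder whose singleton value is $1$, setting $\gOPT_1=1$; (ii) auxiliary bidders that drive the mechanism through phases $t\leq\hat{t}-2$, each phase adding exactly $2^{t-1}+1$ bidders of near-equal marginal contribution in order to make the slack in Lemma~\ref{lem:lastelement} vanish as $\hat{t}\to\infty$; (iii) three groups of final-phase bidders covering $E_A$, $E_B$, and $E_C$ respectively, arranged so that $\firstset=S_{\hat{t}-1}$ covers $E_A\cup E_B$ (with $E_B$-bidders added first by making them infinitesimally more attractive than $E_A$-bidders) and $\bsecondset=S_{\hat{t}}$ covers $E_B\cup E_C$; and (iv) a pool of ``optimal'' bidders that cover $E_R$ but whose costs are tuned to sit infinitesimally above the acceptance threshold in every phase where they are offered a price, so they are systematically rejected. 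Because the $E_B$-bidders are added first to $S_{\hat{t}-1}$ and their prices sum to exactly $B/2$, the prefix $T$ picked inside \textsc{Maximize-Value} covers exactly $E_B$; since $E_B$ is already covered by $\bsecondset$, we obtain $v(\thirdset) = v(\bsecondset) = \gOPT_{\hat{t}}/2$.

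Verification is then a phase-by-phase simulation: I check the argmax bidder, the offered price, and the accept/reject decision at every step, confirming along the way that $\firstset$ is exactly budget-feasible so that the pruning step (line~\ref{algline:newprice}) is not triggered. Disjointness of the blocks $E_A,E_B,E_C,E_R$ then immediately yields the claimed values of $v(\firstset)$, $v(\bsecondset)$, $v(\thirdset)$, and $\OPT$. The main obstacle is calibrating costs so that many invariants hold simultaneously: (a) the algorithm adds exactly $2^{t-1}+1$ bidders in each early phase, (b) the optimal bidders covering $E_R$ are rejected at the intended phase and only then, and (c) the addition order in phase $\hat{t}-1$ places $E_B$-bidders before $E_A$-bidders. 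I will handle this through a single small slack parameter $\varepsilon$ that governs all perturbations (cost gaps between bidder types, tie-breaking in the argmax ordering, and the fractional ``overshoot'' of the phase targets), and let $\varepsilon\to 0$ jointly with $\hat{t}\to\infty$. In this limit, the inequalities in Lemmas~\ref{lem:main} and \ref{lem:lastelement} become equalities in the directions used by the upper-bound proof, and the ratio $\OPT/v(\mathcal{M}(I))$ tends to $9/2$, yielding the $(4.5-\epsilon)$-lower bound for every $\epsilon>0$.
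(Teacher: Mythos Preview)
Your plan contains a genuine gap: the target configuration you aim for is infeasible.  You want the rejected block $E_R$ to be disjoint from $E_A\cup E_B\cup E_C$ and to carry weight $(3/2)\gOPT_{\hat t}$, so that $\marginal{R}{W_1\cup\bsecondset}=(3/2)\gOPT_{\hat t}$.  But every $E_R$-bidder $j$ that is rejected in some phase $t_j$ must have $c_j>\marginal{\{j\}}{S_{t_j}}\cdot B/\gOPT_{t_j}$; since in your coverage instance no bidder in any $S_{t_j}$ touches $E_R$, this marginal equals the full weight $w_j$ that $j$ covers, and hence $c_j>w_j\,B/\gOPT_{t_j}\ge w_j\,B/\gOPT_{\hat t}$.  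Summing over a budget-feasible $R$ gives $\sum_{j\in R}w_j<\gOPT_{\hat t}$, so $\marginal{R}{W_1\cup\bsecondset}<\gOPT_{\hat t}$, not $(3/2)\gOPT_{\hat t}$.  Plugging this back into your accounting yields $\OPT\le(7/4)\gOPT_{\hat t}$ against an output of $\gOPT_{\hat t}/2$, i.e.\ only a $3.5$-ratio.  In short, the $(3/2)$ constant in Lemma~\ref{lem:main} is \emph{not} tight for any instance of the kind you describe: the slack in that lemma comes from the step $\marginal{R_a}{W_1\cup\bsecondset}\le v(R_a)\le v(\mathbf S_{\hat t-2})+\marginal{R_a}{\mathbf S_{\hat t-2}}$, and the $v(\mathbf S_{\hat t-2})$ term is a pure overcount whenever $R_a$'s coverage is disjoint from that of the early-phase sets---which your construction forces.

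The paper's proof uses a completely different balance.  It takes $\hat t=3$ (not $\hat t\to\infty$), and the $4.5$ ratio arises from $v(W_1)\approx\gOPT_{3}/3$, $v(W_1\cup W_2)\approx\gOPT_3/2$, and $\marginal{R}{W_1\cup W_2}\approx\gOPT_3$, so $\OPT\approx(3/2)\gOPT_3=6$ while the output is $4/3$.  The crucial device---absent from your plan---is a \emph{capped-additive} interaction between a single bidder of $W_1$ and all of $W_2$: this single bidder is placed first in $W_1$ (so it is the prefix $T$ that \textsc{Maximize-Value} appends), yet its coverage is entirely subsumed by $W_2$, forcing $v(W_3)=v(W_2)$.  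If you want to salvage your approach you will need either such a non-disjoint interaction (so that $R$-bidders can be rejected cheaply because their marginal at rejection time is small while their standalone value is large), or to abandon the $\beta=3/2$ target and rebalance around $\beta\le 1$.
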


\section{Non-Monotone Submodular Valuations}\label{sec:nonmonotone}

In this section, we develop a deterministic clock auction that achieves a constant factor approximation for submodular valuation functions (not necessarily monotone) and runs in polynomial time. This is the first deterministic budget-feasible mechanism for general submodular valuation functions that achieves a constant factor approximation, even for non-polynomial time mechanisms, and we achieve this with a clock auction that runs in polynomial time. The mechanism combines the backward greedy technique from the previous section and techniques for maximizing non-monotone submodular  functions.

Similarly to \mechone \ from the previous section, \mechtwo, formally described below as Mechanism~\ref{alg:nonmonotone}, proceeds in phases and aims to find a set of bidders with value at least $\gOPT_t$ at each phase $t$. The main difference with \mechone \ is that, instead of constructing a single tentative set $S_t$ of bidders at each phase $t$,  \mechtwo \ constructs two disjoint tentative sets $S^1_t$ and $S^2_t$ of bidders at each phase. This technique of constructing two disjoint sets of bidders to handle  non-monotone valuation functions in budget-feasible mechanism design was introduced  by \citet{amanatidis2019budget} with a mechanism called \textsc{Simultaneous Greedy}. \mechtwo \ integrates this technique in the \mechone \ mechanism designed for monotone valuation functions.

At each iteration of phase $t$, the mechanism considers bidder $i$ and set of bidders $S^k_t \in \{S^1_t, S^2_t\}$ such that the marginal contribution $\marginal{\{i\}}{S^k_t}$ of $i$ to $S^k_t$ is maximized. It then offers price $p_i = \min \left\{p_i,~ \marginal{\{i\}}{S^k_{t}} \cdot \frac{B}{\gOPT}\right\}$ to bidder $i$, adds bidder $i$ to $S_t^k$ if $i$ accepts price $p_i$, and permanently eliminates bidder $i$ otherwise. A phase terminates when either $S^1_t$ or $S^2_t$ reaches the target $\gOPT_t$, or when there are no more bidders to offer a price to. At the beginning of a new phase $t$, the mechanism sets aside both $S^1_{t-1}$ and $S^2_{t-1}$.

\vspace{.3cm}

\begin{algorithm}[H]\label{alg:nonmonotone}
\setstretch{1.25}
\SetKwInOut{Input}{Input}
\Input{Budget $B$,   valuation function $v: 2^N \rightarrow \mathbb{R}$}
 initialize $A \leftarrow \bidders$, $S^1_{0}, S^2_{0}, S^1_{1} \leftarrow \emptyset, S^2_1 \leftarrow \left\{\text{argmax}_{i \in \bidders} \ \vali\right\}$,  $\gOPT \leftarrow v(S^2_1)$, $t \leftarrow 1$,  $p_i \leftarrow B$ for all $i \in \bidders$

\While{$A \setminus (S^1_{t-1} \cup S^2_{t-1} \cup S^1_{t} \cup S^2_{t})\neq \emptyset$}{

update $t \leftarrow t+1$, $\gOPT \leftarrow 2 \gOPT$ and initialize $S^1_t, S^2_t \leftarrow \emptyset$
 \tcp*{Start  a new phase}
 
 \While{$\max \{v(S^1_{t}), v(S^2_{t})\} < \gOPT$ and $A \setminus (S^1_{t-1} \cup S^2_{t-1} \cup S^1_{t} \cup S^2_{t}) \neq \emptyset$}{
 
  let $(i, k) \leftarrow \text{argmax}_{i \in A \setminus \left(S^1_{t-1} \cup S^2_{t-1} \cup S^1_{t} \cup S^2_{t}\right), k \in [2]}\marginal{i}{ S^k_{t}}$ 
  
  update $p_i \leftarrow \min \left\{p_i, \marginal{\{i\}}{S^k_{t}} \cdot \frac{B}{\gOPT}\right\}$
  
	\If{bidder $i$ accepts price $p_i$}{
	 $S^k_{t} \leftarrow S^k_{t} \cup \{i\}$ \tcp*{Add bidder $i$ to current solution}
	 
	}
	\Else{
	
		$A \leftarrow A \setminus \{i\}$ \tcp*{Permanently discard bidder $i$}
	}
 }
 }
 let $T^k_{j} \leftarrow \unmax(v, S^k_{j})$, for $j \in \{t-1, t\}$ and $k \in \{1,2\}$

let $\Sfinal \leftarrow \argmax_{S' \in \{ S^1_{t-1}, S^2_{t-1}, 
T^1_{t}, T^2_{t}, T^1_{t-1}, T^2_{t-1}\}} v(S')$

\If(\tcp*[f]{ensure budget feasibility}){$\sum_{i \in \Sfinal} p_i > B$ }{
    update $\Sfinal \leftarrow \Sfinal \setminus \{i\}$ where $i$ is the last bidder  added to $\Sfinal$ 
}
\Return{$\Sfinal$  and prices $p_i$ for each bidder $i \in \Sfinal$}

 \caption{\textsc{Simultaneous-Iterative-Pruning}, a deterministic budget-feasible clock auction for non-monotone submodular valuation functions}
\end{algorithm}

\vspace{.3cm}
After the last phase $t$, the mechanism runs an unconstrained submodular maximization algorithm that achieves a $2$-approximation, for example the algorithm of \citet{buchbinder2015tight}, over valuation function $v$ and ground set of bidders $S_j^k$ to obtain sets $T_j^k$ such that $T_j^k \geq \frac{1}{2}\max_{T \subseteq S_j^k} v(T)$ for each set of bidder $S_j^k$ constructed in one of the last two phases of the mechanism. Finally, we return the set $\Sfinal$ of bidders of maximum value among  $6$ solutions constructed during the last two phases, but without the last bidder added to $\Sfinal$ if $\Sfinal$ is not budget feasible.

Our main result for this section is that, apart from being a deterministic clock auction,\\ \mechtwo \ also achieves the best-known approximation for non-monotone submodular valuations.

\begin{theorem} 
\label{thm:nmmain}
Let $v$ be a submodular valuation function, then
\mechtwo \ is a polynomial-time deterministic budget-feasible clock auction that achieves a $64$-approximation.
\end{theorem}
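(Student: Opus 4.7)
The plan is to mirror the structure of the proof of Theorem~\ref{thm:4.75}, incorporating three adjustments that address the obstacles introduced by non-monotone submodularity: (i) the upper bound on $\OPT$ can no longer rely on monotonicity; (ii) the mechanism invokes the $2$-approximation algorithm \unmax\ of \citet{buchbinder2015tight}, introducing an extra factor of $2$; and (iii) the parallel two-track construction weakens the per-phase bound on rejected bidders and effectively splits the optimal set across four constructed subsets rather than two. Throughout, let $U$ denote the union of the four sets $S_{\hat{t}-1}^1, S_{\hat{t}-1}^2, S_{\hat{t}}^1, S_{\hat{t}}^2$ and let $\R$ denote a budget-feasible subset of the rejected bidders maximizing $v(\R \mid U)$, in direct analogy with the monotone proof.

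First, I would verify that \mechtwo\ is a polynomial-time deterministic clock auction with budget-feasible output. Prices are updated as minima, so the price trajectory offered to each bidder is weakly decreasing; rejected bidders exit permanently; and the final cleanup removes the last added bidder whenever $\Sfinal$ exceeds the budget. Polynomial runtime follows by combining the argument used for \mechone\ with the polynomial-time guarantee of \unmax.

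The technical core is a non-monotone analog of Lemma~\ref{lem:main} that bounds $\marginal{\R}{U}$. As in the monotone proof, I would partition $\R$ according to whether its bidders were rejected in the last two phases or earlier. Each rejected bidder $j$ satisfies $c_j > \marginal{\{j\}}{S_t^k} \cdot B/\gOPT_t$, where $(j,k)$ was the argmax pair at the moment of rejection, and submodularity gives $\marginal{\{j\}}{U} \leq \marginal{\{j\}}{S_t^k}$. Summing over $\R$ and invoking $\sum_{j \in \R} c_j \leq B$ then yields a bound of the form $\marginal{\R}{U} \leq c \cdot \gOPT_{\hat{t}}$ for some constant $c$. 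The new twist is that, across both tracks, the mechanism accumulates up to roughly $2\sum_t \gOPT_t$ of total value, so $c$ is roughly twice the corresponding constant from the monotone case.

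To upper bound $\OPT$, I would use subadditivity (which holds for non-negative submodular $v$) to write $v(\mathcal{O}) \leq v(\mathcal{O} \cap U) + v(\mathcal{O} \setminus U)$. Since $\mathcal{O} \setminus U$ is a budget-feasible subset of rejected bidders, its value is controlled by the previous step via the definition of $\R$. The term $v(\mathcal{O} \cap U)$ is then distributed across the four sets $S_j^k$, so by submodularity some $S_j^k$ contains a subset of value at least $\frac{1}{4} v(\mathcal{O} \cap U)$, and applying \unmax\ on this set recovers at least $\frac{1}{8} v(\mathcal{O} \cap U)$ in the corresponding candidate $T_j^k$. Combining these bounds with the potential loss of the last added bidder from the final cleanup step (mirroring Lemma~\ref{lem:lastelement}) yields the claimed $64$-approximation. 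The main obstacle will be rigorously establishing the analog of Lemma~\ref{lem:main} under the two-track construction: the argmax pair $(i,k)$ chosen at each iteration interleaves additions across both tracks in ways that do not straightforwardly generalize the monotone argument, and a careful invocation of submodularity is needed to show that a rejected bidder's price, certified against only one track's marginal, nonetheless bounds its marginal contribution to the full union $U$.
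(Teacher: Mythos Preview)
There is a genuine gap in the key step. You propose to bound $\marginal{\R}{U}$ (an analog of Lemma~\ref{lem:main}) and then to use subadditivity to write $v(\mathcal{O}) \leq v(\mathcal{O}\cap U) + v(\mathcal{O}\setminus U)$, claiming that the second term ``is controlled by the previous step via the definition of $\R$.'' But these are different quantities: a bound on $\marginal{\R}{U}$ does not bound $v(\mathcal{O}\setminus U)$. In fact, by submodularity one has $v(\mathcal{O}\setminus U) = \marginal{\mathcal{O}\setminus U}{\emptyset} \geq \marginal{\mathcal{O}\setminus U}{U}$, so the inequality points the wrong way. In the monotone proof this is not an issue because one instead writes $\OPT \leq v(\mathcal{O}\cup U) = v(U) + \marginal{\mathcal{O}}{U}$, but the first inequality here is exactly monotonicity, which you correctly note is unavailable. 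Your rejection-price argument only yields bounds on marginals $\marginal{\{j\}}{S_t^k}$; converting these to a bound on the \emph{value} $v(\mathcal{O}\setminus U)$ is the missing step, and it is precisely where non-monotonicity bites.

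The paper closes this gap with a different device. Rather than bounding $\marginal{\R}{U}$, it bounds $v(O_t^-)$ directly for each phase $t$, using submodularity and non-negativity in the form $v(O_t^-) + v(O_t^- \cup S_t^1 \cup S_t^2) \leq v(O_t^- \cup S_t^1) + v(O_t^- \cup S_t^2)$, hence $v(O_t^-) \leq v(O_t^- \cup S_t^1) + v(O_t^- \cup S_t^2)$. Each right-hand term is then controlled by the rejection prices (giving $v(O_t^-) \leq 6\gOPT_t$), and summing the geometric series yields $v(O^+_{\hat{t}}) \geq \OPT - 12\gOPT_{\hat{t}}$. The argument concludes by a case split on whether $\gOPT_{\hat{t}} \geq \OPT/16$: if so, $S_{\hat{t}-1}^k$ already achieves $\OPT/32$; if not, the surviving optimal bidders $O^+_{\hat{t}}$ are large and distributed over the four sets, and \unmax\ extracts $\OPT/32$ from one of them. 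There is no analog of Lemma~\ref{lem:main} and no benchmark of the form $v(U) + \marginal{\R}{U}$ in the non-monotone analysis.
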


The proof that it is budget feasible is identical to the proof of Lemma~\ref{lem:budgetfeasible} which shows the budget feasibility of \mechone. For the running time, the proof that the outer-while loop terminates in polynomial time is identical to the proof that \mechone \ is a polynomial time mechanism. Finally, for the \unmax \ subroutine, we use a $2$-approximation algorithm for unconstrained non-monotone submodular maximization, such as the algorithm by \citet{buchbinder2015tight} which is polynomial time. Thus, \mechtwo \ is a polynomial-time mechanism.

We now turn toward showing that the mechanism achieves a $64$-approximation. Lemma~\ref{lem:NM-rejectedEachRound} bounds the loss from optimal bidders who were eliminated in each phase $t$. However, the proof for this bound on the loss from eliminated optimal bidders is different from the proof from the previous section which assumes monotonicity. To handle non-monotone valuation functions, the proof exploits the fact that we construct two sets $S^1_t$ and $S^2_t$ at each phase. For non-monotone valuation functions, it is also not sufficient to lower bound the value of active optimal bidders, which can be larger than the value of all active bidders. Next, Lemma~\ref{thm:nmbeforepruning} uses the sets $T_j^k$ obtained by running an unconstrained non-monotone submodular maximziation algorithm to approximate the value of active optimal bidders.

We begin by bounding  the loss from optimal bidders who were eliminated in each round $t$.

\begin{lemma}\label{lem:NM-rejectedEachRound} Assume that $v$ is a submodular valuation function and
let $O^-_t$ denote the subset of optimal bidders rejected in phase $t$.  Then, for all $t \leq \hat{t}$, we have that \[v(O^-_t) \geq 6\gOPT_t.\]
\end{lemma}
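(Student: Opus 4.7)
The plan is to adapt the rejection-analysis technique from Lemma~\ref{lem:main} (the monotone case) to the two-set construction of \mechtwo, in which each phase $t$ simultaneously grows disjoint candidate sets $S^1_t$ and $S^2_t$. The argument should proceed by examining why each optimal bidder $j \in O^-_t$ was rejected in phase $t$: the mechanism offered $j$ a price of the form $\min\{p_j^{\text{prev}}, \max_k \marginal{\{j\}}{S^k_t} \cdot B/\gOPT_t\}$, and $j$ refused. This per-bidder condition --- combined with the fact that $(i,k)$ are picked jointly to maximize $\marginal{\{i\}}{S^k_t}$ --- yields structural inequalities simultaneously tying $c_j$ to both $\marginal{\{j\}}{S^1_t}$ and $\marginal{\{j\}}{S^2_t}$ at the moment of $j$'s consideration.

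First I would use these per-bidder conditions together with the budget-feasibility of the optimum, $\sum_{j \in \mathcal{O}} c_j \leq B$, to aggregate across $O^-_t$, obtaining simultaneous control on $\sum_{j \in O^-_t} \marginal{\{j\}}{S^k_t}$ for both $k \in \{1,2\}$ in terms of $\gOPT_t$. This mirrors the opening step of Lemma~\ref{lem:main}, but has to be carried out for both parallel sets at once. I would then partition $O^-_t$ according to which phase each bidder was rejected in (current phase $t$ versus prior phases $\leq t-1$), paralleling the $f_a + f_b \leq 1$ decomposition.

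Next, to pass from aggregate marginal-contribution bounds to the claimed lower bound $v(O^-_t) \geq 6\gOPT_t$, I plan to invoke the submodular inequality $v(A) + v(B) \geq v(A \cup B) + v(A \cap B)$ with carefully chosen $A$ and $B$ drawn from $\{O^-_t, S^1_t, S^2_t, O^-_t \cup S^1_t, O^-_t \cup S^2_t\}$. The disjointness of $S^1_t, S^2_t$ and the disjointness of $O^-_t$ with $S^1_t \cup S^2_t$ (since $O^-_t$ was rejected) make these sandwich inequalities tractable. Combined with lower bounds on $v(S^k_t)$ approaching $\gOPT_t$, with the doubling schedule $\gOPT_t = 2\gOPT_{t-1}$, and with the telescoping contribution across prior phases, the constant $6$ should emerge through a routine (if intricate) summation.

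The main obstacle will be the lack of monotonicity: in Lemma~\ref{lem:main} the step $v(X) \leq v(X \cup Y)$ is used freely, but here I must rely purely on submodularity to compare sets of different sizes. The twin-set construction of~\citet{amanatidis2019budget}, where each bidder is priced based on $\max_k \marginal{\{j\}}{S^k_t}$, is precisely the device that compensates for this loss; the proof should exploit the fact that a rejection at phase $t$ simultaneously encodes information about both $S^1_t$ and $S^2_t$, keeping the constant bounded by a fixed $6$ rather than degrading with the number of phases. The trickiest calibration will be ensuring that the telescoping across phases (controlled by the $\gOPT_t = 2\gOPT_{t-1}$ schedule) dovetails with the parallel-set accounting to produce exactly the claimed constant.
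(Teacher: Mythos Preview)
Your plan conflates this lemma with the later Corollary~\ref{cor:NM-remainingValue}. First, note that the displayed inequality is a typo: the paper proves (and everywhere uses) $v(O^-_t)\le 6\gOPT_t$, an \emph{upper} bound on the value lost in phase $t$; your proposal literally aims at a lower bound, which is neither true nor what is needed downstream. Second, $O^-_t$ is by definition the set of optimal bidders rejected \emph{in phase $t$ only}, so there is nothing to ``partition according to which phase each bidder was rejected in,'' no telescoping across prior phases, and no role for the doubling schedule $\gOPT_t=2\gOPT_{t-1}$. All of that multi-phase accounting belongs to the corollary, not to this lemma, which is a purely single-phase statement.

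The actual argument is short, and the concrete step your outline is missing is the specific use of non-negativity plus submodularity that replaces monotonicity:
\[
v(O^-_t)\;\le\; v(O^-_t)+v(O^-_t\cup S^1_t\cup S^2_t)\;\le\; v(O^-_t\cup S^1_t)+v(O^-_t\cup S^2_t).
\]
For each $k$, submodularity then gives $v(O^-_t\cup S^k_t)\le v(S^k_t)+\sum_{j\in O^-_t}\marginal{\{j\}}{S^k_t}$. Because $(i,k)$ is chosen to maximize the marginal, the rejection of $j$ bounds $\marginal{\{j\}}{S^{k,j}_t}$ by $c_j\,\gOPT_t/B$ for \emph{both} $k$ simultaneously, and submodularity pushes this to the final $S^k_t$; summing over $j\in O^-_t\subseteq O$ with $\sum_j c_j\le B$ yields at most $\gOPT_t$ per side. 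The last ingredient is an \emph{upper} bound $v(S^k_t)\le 2\gOPT_t$ (not the lower bound you propose): before the last bidder is added one has $v(S^k_t)<\gOPT_t$ by the stopping rule, and a single bidder adds marginal at most $\gOPT_1\le\gOPT_t$. Assembling, $v(O^-_t)\le 2\cdot 2\gOPT_t+2\cdot\gOPT_t=6\gOPT_t$.
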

\begin{proof}
 By submodularity and non-negativity, we know that \[v(O^-_t) \leq v(O^-_t) + v(O^-_t \cup S^1_t \cup S^2_t) \leq v(O_t^- \cup S_t^1) + v(O_t^- \cup S_t^2).\]  We can bound the terms $v(O^-_t \cup S_t^j)$ for $j \in \{1,2\}$ separately.  By submodularity, we know that 
\begin{align*}
    v(O^-_t \cup S_t^1) \leq v(S_t^1) + \sum_{i \in O^-_t}{\marginal{i}{S_t^1}}.
\end{align*}

On the other hand, each $i \in O^-_t$ was rejected because when it was offered a new price, this price was too low.  Let $S^{1,i}_t$ denote the set $S^1_t$ at the point when $i$ was offered a new price.  Then we have that $\marginal{i}{S^{1,i}_t} \geq \marginal{i}{S^1_t}$.  Thus, we have
\begin{align*}
    v(O^-_t \cup S_t^1) \leq v(S_t^1) + \sum_{i \in O^-_t}{c_i \cdot \frac{\gOPT_t}{B}} \leq v(S_t^1) + \gOPT_t.
\end{align*}
Similarly, \[v(O^-_t \cup S_t^2) \leq v(S_t^2) + \gOPT_t.\]

Since for every phase $t \geq 1$ we know $v(\{i\}) \leq \gOPT_t$ we have that $v(S_t^j) \leq 2\gOPT_t$ for $j \in \{1,2\}$.  Combining these inequalities, we have that $v(O_t^-) \leq 6\gOPT_t$ for all $t$, completing the proof.
\end{proof}

By consequence of Lemma \ref{lem:NM-rejectedEachRound}, if we let $O^+_t$ denote the set of bidders in the optimal solution which \emph{remain active} at the \emph{end} of phase $t$ we obtain the following corollary
\begin{corollary}\label{cor:NM-remainingValue} Assume that $v$ is a submodular valuation function and
let $O^+_t$ denote the set of bidders in the optimal solution which are not rejected by the end of phase $t$, we then have
\[v(O^+_t) \geq \OPT - 12\gOPT_t.\]
\end{corollary}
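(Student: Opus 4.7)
The plan is to derive the corollary by bounding $\OPT$ using the structural decomposition of $\mathcal{O}$ into (i) the bidders of $\mathcal{O}$ still active at the end of phase $t$, namely $O^+_t$, and (ii) the bidders of $\mathcal{O}$ rejected in each of the first $t$ phases. Lemma~\ref{lem:NM-rejectedEachRound} already gives us a per-phase bound on $v(O^-_s)$, so the corollary should follow by aggregating these bounds and exploiting the fact that the target values $\gOPT_s$ form a geometric sequence.

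First I would note that $\mathcal{O}$ decomposes as the disjoint union $\mathcal{O} = O^+_t \cup \bigcup_{s=1}^{t} O^-_s$, since a bidder can be rejected in at most one phase. Since $v$ is non-negative and submodular, it is subadditive, and iterated subadditivity on this disjoint union gives
\begin{equation*}
\OPT \;=\; v(\mathcal{O}) \;\leq\; v(O^+_t) + \sum_{s=1}^{t} v(O^-_s).
\end{equation*}

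Next I would apply Lemma~\ref{lem:NM-rejectedEachRound} (whose conclusion, as its proof makes clear, is the upper bound $v(O^-_s) \leq 6\gOPT_s$) to each term of the sum, obtaining $\sum_{s=1}^{t} v(O^-_s) \leq 6 \sum_{s=1}^{t} \gOPT_s$. Because the mechanism doubles the target at the start of every phase, $\gOPT_s = \gOPT_t / 2^{t-s}$, so
\begin{equation*}
\sum_{s=1}^{t} \gOPT_s \;=\; \gOPT_t \sum_{k=0}^{t-1} 2^{-k} \;<\; 2\gOPT_t.
\end{equation*}
Substituting back yields $\OPT \leq v(O^+_t) + 12\gOPT_t$, which rearranges to the claimed bound $v(O^+_t) \geq \OPT - 12\gOPT_t$.

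There is no real obstacle here: the argument is a short combination of subadditivity with a geometric-series sum. The only mildly delicate point is justifying subadditivity over a disjoint decomposition from non-negative submodularity, which follows from $v(A \cup B) \leq v(A) + v(B) - v(A \cap B)$ by induction together with $v(\emptyset) \geq 0$.
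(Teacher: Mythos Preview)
Your proposal is correct and follows essentially the same route as the paper's own proof: decompose $\mathcal{O}$ into $O^+_t$ and the per-phase rejected sets $O^-_s$, use subadditivity (from non-negative submodularity) to get $\OPT \leq v(O^+_t) + \sum_s v(O^-_s)$, apply Lemma~\ref{lem:NM-rejectedEachRound} (in its intended direction $v(O^-_s)\leq 6\gOPT_s$), and sum the geometric series $\sum_s \gOPT_s < 2\gOPT_t$. The only difference is presentational---you spell out the subadditivity and geometric-series steps more explicitly than the paper does.
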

\begin{proof}
By submodularity and Lemma \ref{lem:NM-rejectedEachRound} we have
 \[v(O^+_t) \geq v(O) - \sum_{t' \in [t]}{v(O^-_{t'})} \geq \OPT - \sum_{t' \in [t]}{6\gOPT_{t'}}.\]
Since the value of $\gOPT$ increases by a factor of two in each phase, we can rewrite our above bound as $v(O^+_t) \geq \OPT - 12\gOPT_t$, completing the proof.
\end{proof}

With Corollary \ref{cor:NM-remainingValue} in hand, we can now give a bound on the approximation obtained from the best set in $\{ S^1_{\hat{t}-1}, S^2_{\hat{t}-1}, 
T^1_{\hat{t}}, T^2_{\hat{t}}, T^1_{\hat{t}-1}, T^2_{\hat{t}-1}\}$.

\begin{lemma}\label{thm:nmbeforepruning}
For any submodular valuation function $v$, we have $$\max_{S' \in \{ S^1_{\hat{t}-1}, S^2_{\hat{t}-1}, 
T^1_{\hat{t}}, T^2_{\hat{t}}, T^1_{\hat{t}-1}, T^2_{\hat{t}-1}\}} v(S') \geq \frac{\OPT}{32}$$
where $\hat{t}$ is the last phase of the mechanism.
\end{lemma}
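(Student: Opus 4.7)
The plan is to establish $\max_{S' \in \mathcal{S}} v(S') \geq \OPT/32$, where $\mathcal{S}$ denotes the six-element collection in the statement, by combining two complementary lower bounds on this maximum: one that decreases in $\gOPT_{\hat{t}}$ and comes from approximating the value of the optimal bidders still alive at the end of phase $\hat{t}$, and one that increases in $\gOPT_{\hat{t}}$ and comes from the fact that phase $\hat{t}-1$ successfully hit its target. Balancing the two at $\gOPT_{\hat{t}} = \OPT/16$ will give the uniform factor of $32$. The trivial case $\hat{t}=1$ only occurs when $|\bidders|=1$ and is immediate from $v(T^2_1) \geq v(S^2_1)/2 = \OPT/2$.

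For the decreasing bound, I will first use that, because $\hat{t}$ is the last phase, the outer while-loop condition fails after phase $\hat{t}$: every still-active bidder, and in particular every optimal bidder in $O^+_{\hat{t}}$, lies in $S^1_{\hat{t}-1} \cup S^2_{\hat{t}-1} \cup S^1_{\hat{t}} \cup S^2_{\hat{t}}$. Non-negativity together with submodularity yields subadditivity, so
\[v(O^+_{\hat{t}}) \;\leq\; \sum_{\substack{k \in \{1,2\} \\ j \in \{\hat{t}-1,\hat{t}\}}} v\!\left(O^+_{\hat{t}} \cap S^k_j\right).\]
Each summand is bounded by $\max_{T \subseteq S^k_j} v(T) \leq 2\, v(T^k_j)$, thanks to the $2$-approximation guarantee of \unmax. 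Letting $M = \max_{k,j} v(T^k_j)$, this yields $v(O^+_{\hat{t}}) \leq 8M$, which combined with Corollary~\ref{cor:NM-remainingValue} produces $M \geq (\OPT - 12\gOPT_{\hat{t}})/8$.

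For the increasing bound, I will argue that phase $\hat{t}-1$ terminated because its target was reached, rather than by exhausting bidders: otherwise the outer while loop would already have halted after phase $\hat{t}-1$, contradicting the existence of phase $\hat{t}$. Hence $\max\{v(S^1_{\hat{t}-1}), v(S^2_{\hat{t}-1})\} \geq \gOPT_{\hat{t}-1} = \gOPT_{\hat{t}}/2$, with $\hat{t}=2$ immediate from $v(S^2_1) = \gOPT_1$ by initialization. Putting the two bounds together,
\[\max_{S' \in \mathcal{S}} v(S') \;\geq\; \max\!\left\{\frac{\gOPT_{\hat{t}}}{2},\; \frac{\OPT - 12\gOPT_{\hat{t}}}{8}\right\} \;\geq\; \frac{\OPT}{32},\]
where the last inequality follows by splitting into the cases $\gOPT_{\hat{t}} \gtrless \OPT/16$ (the two expressions coincide at $\OPT/32$ when $\gOPT_{\hat{t}} = \OPT/16$). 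The main subtlety is in the first bound: for non-monotone $v$ one cannot simply lower bound $v(S^k_j)$ by the target $\gOPT_j$, because non-optimal bidders inside $S^k_j$ can drag its value well below that of a carefully chosen subset. The role of \unmax\ is precisely to recover the best such subset up to a factor of $2$, and the constant $64$ appearing in the main theorem arises from this $\OPT/32$ bound together with an additional factor-$2$ loss when the final budget-feasibility step potentially discards one bidder from $\Sfinal$.
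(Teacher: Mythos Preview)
Your proof is correct and follows essentially the same approach as the paper: both split into the cases $\gOPT_{\hat{t}} \gtrless \OPT/16$, use $\max\{v(S^1_{\hat{t}-1}),v(S^2_{\hat{t}-1})\}\geq \gOPT_{\hat{t}}/2$ in the high case, and in the low case combine the covering of $O^+_{\hat{t}}$ by the four sets $S^k_j$, the $2$-approximation of \unmax, and Corollary~\ref{cor:NM-remainingValue} to get $8\max_{k,j} v(T^k_j) \geq v(O^+_{\hat{t}}) \geq \OPT - 12\gOPT_{\hat{t}}$. Your explicit justification for why phase $\hat{t}-1$ must have reached its target (and your handling of $\hat{t}\in\{1,2\}$) fills in details the paper leaves implicit, but the argument is the same.
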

\begin{proof}
There are two cases based on when the last phase $\hat{t}$ of the mechanism occurs. First, if $\gOPT_{\hat{t}} \geq \frac{\OPT}{16}$, then we have $\max\{v(S^1_{\hat{t}-1}), v(S^2_{\hat{t}-1})\} \geq  \frac{\gOPT_{\hat{t}}}{2} \geq \frac{\OPT}{32}.$

Otherwise, $\gOPT_{\hat{t}} \leq \frac{\OPT}{16}$, which is the main case. Since all of the bidders in $O^+_{\hat{t}}$ remain active at the end of phase $\hat{t}$, we know that any bidder $i \in O^+_{\hat{t}}$ must be contained in one of our four candidate solutions: $S^1_{\hat{t}-1}$, $S^2_{\hat{t}-1}$, $S^1_{\hat{t}}$, $S^2_{\hat{t}}$.  But then, by submodularity we have that
\begin{equation}\label{eq:coverOPlus}
    v(O^+_{\hat{t}} \cap S^1_{\hat{t}-1}) + v(O^+_{\hat{t}} \cap S^2_{\hat{t}-1}) + 
    v(O^+_{\hat{t}} \cap S^1_{\hat{t}}) +
    v(O^+_{\hat{t}} \cap S^2_{\hat{t}}) \geq v(O^+_{\hat{t}})
\end{equation}

After our auction completes offering new prices to all active bidders (i.e., the outer while loop terminates), we apply an unconstrained optimizer.
After applying the unconstrained $2$-approximate non-monotone submodular function maximizer, e.g. from \citet{buchbinder2015tight}, to each of our four candidate solutions to obtain $T^k_{\hat{t}-1}$ and $T^k_{\hat{t}}$ for $k \in [2]$.  We then have
\begin{equation}\label{eq:fourSolutions}
    2v(T^1_{\hat{t}-1})+2v(T^2_{\hat{t}-1})+2v(T^1_{\hat{t}})+2v(T^2_{\hat{t}}) \geq v(O^+_{\hat{t}}).
\end{equation}
Let $T^* = \text{argmax}_{T \in \{T^1_{\hat{t}-1}, T^2_{\hat{t}-1}, T^1_{\hat{t}}, T^2_{\hat{t}}\}}{v(T)}$.  From Equation \eqref{eq:fourSolutions} we can then observe
\begin{equation}\label{eq:8covers}
8v(T^*) \geq v(O^+_{\hat{t}}).
\end{equation}
Next, combining Equation \eqref{eq:8covers} and Corollary \ref{cor:NM-remainingValue} gives
\begin{equation}\label{eq:mainEquation}
v(T^*) \geq \frac{\OPT}{8} - \frac{3}{2}\gOPT_{\hat{t}} \geq \frac{\OPT}{32}
\end{equation}
where the last inequality is since  $\gOPT_{\hat{t}} \leq \frac{\OPT}{16}$.
\end{proof}

We are now ready to complete the proof of Theorem~\ref{thm:nmmain} by showing that \mechtwo\ achieves a $64$-approximation to the optimal value.

\begin{proof}[Proof for Theorem~\ref{thm:nmmain}]
\mechtwo \ is clearly deterministic. Next, note that the sequence of prices offered to a bidder $i$ is descending since each update of $p_i$ is the minimum of the previous price and another price. Moreover, once a bidder rejects a price, it exits the auction and is not considered anymore. Thus, \mechtwo \ is a clock-auction.

There are two  cases based on $\Sfinal$. If $\Sfinal$ is initialized to $S_j^k$ or $T_j^k$ with $j = 1$, then we have $\Sfinal = \left\{\text{argmax}_{i \in \bidders} \ \vali\right\}$ and since  $\sum_{i \in \Sfinal} p_i = B$, the mechanism does not remove a bidder from $\Sfinal$ and 
we have 
$v(\Sfinal) \geq v(T^*) \geq   \frac{\OPT}{32}$
by Lemma~\ref{thm:nmbeforepruning}.

 Otherwise, $j > 1$, and  note that by definition of $\gOPT_1$ and by submodularity, for any set $T$ and bidder $i$, we have $\marginal{i}{T} \leq \max_{i'} v(i') = \gOPT_1 \leq \frac{\gOPT_t}{2} \leq \frac{\max\{v(S_t^1), v(S_t^2)\}}{2}$ for all $t \geq 2$. Let $i$ be the potential bidder that was removed from $\Sfinal$ before $\Sfinal$ is returned. Then,
$$v(\Sfinal) \geq v(\Sfinal \cup \{i\}) - \frac{1}{2} \cdot \max\{v(S_t^1), v(S_t^2)\} \geq \frac{1}{2}\max_{S' \in \{ S^1_{\hat{t}-1}, S^2_{\hat{t}-1}, 
T^1_{\hat{t}}, T^2_{\hat{t}}, T^1_{\hat{t}-1}, T^2_{\hat{t}-1}\}} v(S') \geq \frac{\OPT}{64}$$
where the second inequality is since $\Sfinal \cup \{i\} = \argmax_{S' \in \{ S^1_{\hat{t}-1}, S^2_{\hat{t}-1}, 
T^1_{\hat{t}}, T^2_{\hat{t}}, T^1_{\hat{t}-1}, T^2_{\hat{t}-1}\}} v(S')$ and the last inequality is by Lemma~\ref{thm:nmbeforepruning}.
\end{proof}

\section{Subadditive Valuations}\label{sec:subadditive}
\newcommand{\prevset}{S_{\text{prev}}}
\newcommand{\currset}{S_{\text{curr}}}
\newcommand{\finalset}{S_{\text{final}}}
\newcommand{\opt}{O}
In this section we present a secondary result demonstrating how our method of gradually refining an estimate of \OPT\ while maintaining value monotonicity can be used to derandomize  the (randomized) budget-feasible auction of \citet{bei2017worst}. This auction achieves the best known approximation of $O(\log{n}/\log{\log{n}})$ for subadditive valuations, and our deterministic auction matches this bound. The resulting deterministic auction improves upon the previous best deterministic auction of \citet{dobzinski2011mechanisms} which achieves only a $O(\log^3{n})$-approximation. We note that, unlike our auctions for submodular valuation functions that use value queries, the following auction uses demand queries. This is due to the fact that, as we highlighted above, no non-trivial approximation can be achieved using a polynomial number of value queries when maximizing an XOS function \citep[Theorem 6.2]{amanatidis2019budget} and all XOS functions are subadditive.

\vspace{5pt}
\begin{algorithm}[H] \label{alg:subadditive}
	\SetKwInOut{Input}{Input}
	\Input{Budget $B$,   valuation function $v: 2^N \rightarrow \mathbb{R}$}
	initialize  $A \leftarrow \mathcal{N}$, $\prevset \leftarrow \emptyset, \currset \leftarrow \emptyset$, $t \leftarrow 0$\;
	
	\While{$A \setminus \left(\prevset \cup \currset \right) \neq \emptyset$}{		
		$t \leftarrow t + 1$		\tcp*{start  a new phase}
		
		$\prevset \leftarrow \text{argmax}_{S \in \{\prevset, \currset\}}\{v(S)\}$\;
		
		update price $p_i$ of each bidder $i \in A \setminus \prevset$ to $B/t$\;
			
		\If{bidder $i$  rejects new price}{
			update $A \leftarrow A \setminus \{i\}$ \tcp*{permanently eliminate bidder $i$}
		}
	
		Let $\currset$ be the feasible subset of $A \setminus \prevset$ returned by the $2$-approximation algorithm of \citet{badanidiyuru2012optimization}  at the current price level\;
	}

	let $\finalset \leftarrow \text{argmax}_{S \in \{\prevset, \currset\}}\{v(S)\}$\;
	
	\Return{$\finalset$  and prices $p_i$ for each bidder $i \in \finalset$}
	
	 \caption{A deterministic budget-feasible clock auction for subadditive valuation functions}
\end{algorithm}
\vspace{5pt}

Notice that our auction for subadditive valuations follows a similar template to our auctions for submodular valuations.  While for submodular valuations we gradually increase a benchmark value for the subset of bidders we include in our temporary solution, in Mechanism \ref{alg:subadditive} we gradually increase a benchmark \emph{size} for the subset of bidders we include in our temporary solution.  Notably, in both settings we maintain two solutions which ensures that the value that our auction obtains is monotone non-decreasing.  This is a key tool that allows for the approximation guarantees of all of our auctions.  We note that the analysis of the approximation factor of our auction follows almost directly from the analysis in \citet{bei2017worst}.  We include a detailed proof of the approximation factor for completeness, below.

\begin{theorem}
Let $v$ be a subadditive valuation function, then Mechanism \ref{alg:subadditive} is a polynomial-time deterministic budget-feasible clock auction that obtains a $O(\log{n}/\log{\log{n}})$-approximation. 
\end{theorem}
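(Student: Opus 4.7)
The easy properties are immediate: determinism holds by inspection; the auction is a clock auction since bidders outside $\prevset$ receive the strictly decreasing price $B/t$ while bidders in $\prevset$ retain their earlier (weakly higher) price; both $\prevset$ and $\currset$ are budget feasible because each was returned by the 2-approximation of \citet{badanidiyuru2012optimization} at some phase $t'$ under an effective cap of $t'$ bidders at price $B/t'$ each; and polynomial time follows since every phase runs the polynomial-time demand-query 2-approximation and the mechanism terminates within at most $O(n)$ phases (once $2t \geq |A|$, the stopping condition is forced).

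The plan next records the crucial monotonicity observation: because $\prevset$ is always updated to $\argmax\{v(\prevset), v(\currset)\}$ and $\finalset$ is assigned by the same rule at the end, we have $v(\finalset) \geq v(\prevset_t)$ and $v(\finalset) \geq v(\currset_t)$ for every phase $t$. In particular, phase $1$ runs the $2$-approximation on a budget that fits only one bidder, so $v(\finalset) = \Omega(\max_i v(\{i\}))$.

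For the approximation factor itself, the strategy mirrors \citet{bei2017worst}. Fix $k = \Theta(\log n/\log\log n)$ and partition $\opt$ into large-cost bidders $\opt_L = \{i \in \opt : c_i > B/k\}$ and small-cost bidders $\opt_S = \opt \setminus \opt_L$. Budget feasibility of $\opt$ forces $|\opt_L| < k$, and subadditivity combined with the singleton bound yields $v(\opt_L) = O(k) \cdot v(\finalset)$. For $\opt_S$, I would invoke the structural lemma of \citet{bei2017worst} ensuring a subset $S' \subseteq \opt_S$ with $|S'| \leq k$ and $v(S') = \Omega(v(\opt_S) \cdot \log\log n/\log n)$. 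Since every $i \in \opt_S$ has $c_i \leq B/k$, such bidders are not eliminated through phase $\min(T,k)$, so $S'$ is a valid candidate for the $2$-approximation at that phase; splitting $S'$ into $S' \cap \prevset$ and $S' \setminus \prevset$ and applying subadditivity gives $v(S') \leq v(\prevset) + 2 v(\currset) \leq 3 v(\finalset)$. Combining via $v(\opt) \leq v(\opt_L) + v(\opt_S)$ yields the desired $O(\log n/\log\log n)$ approximation.

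The main obstacle is the clean invocation of the subadditive structural lemma and handling the edge case $T < k$, in which $\opt_S$ already lies in the final $\prevset \cup \currset$ and subadditivity alone closes the argument. Conceptually, the maintained pair $(\prevset, \currset)$ derandomizes the random scale-guessing step from \citet{bei2017worst}: instead of sampling to estimate the right number of winners to target, we sweep through every value of $t$ in increasing order while preserving the best solution found, which guarantees that the bound is matched at the correct scale without relying on sampling.
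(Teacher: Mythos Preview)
Your proposal has a genuine gap in the treatment of $\opt_S$. The ``structural lemma of \citet{bei2017worst}'' you invoke---that every subadditive $v$ admits some $S'\subseteq \opt_S$ with $|S'|\le k$ and $v(S')=\Omega(v(\opt_S)/k)$---is not a true statement and is not what \citet{bei2017worst} prove. A simple counterexample: take $v(S)=|S|$ and let every bidder have cost $B/n$. Then $\opt_S=\opt$ has value $n$, while every $k$-subset has value exactly $k$, so the best you can certify at phase $k$ is $v(\finalset)=\Omega(k)$, not $\Omega(n/k)$. Your single cost threshold $B/k$ and the single phase $k$ are simply too coarse; looking only at phase $k$ throws away all phases $t>k$, which is precisely where the mechanism captures such instances.

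What the paper (following \citet{bei2017worst}) actually does is a multi-scale decomposition driven by the \emph{costs}, not a single threshold. Sort $\opt$ by decreasing cost and greedily peel off blocks $Z_1,Z_2,\dots,Z_{r+1}$, where each $Z_i$ has size $\lfloor B/c_{j(i)}\rfloor$ starting from the largest remaining cost $c_{j(i)}$. Every bidder in $Z_i$ has cost at most $B/|Z_i|$, so the entire block $Z_i$ is still active at phase $|Z_i|$ and is a valid candidate for the $2$-approximation there; your subadditivity split $v(Z_i)\le 2v(\prevset)+2v(\currset)$ then applies at that phase. The remaining work is to show $r=O(\log n/\log\log n)$: from $\sum_j c_j\le B$ and $c_{j(i)}>B/(|Z_i|+1)$ one gets a telescoping inequality that, via AM--GM, forces $|Z_r|\ge (r/2)^r$, which combined with $|Z_r|\le n$ bounds $r$. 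Subadditivity then gives that some $Z_i$ carries an $\Omega(1/r)$ fraction of $v(\opt)$. The essential point you are missing is that different pieces of $\opt$ are accounted for at \emph{different} phases $|Z_i|$; the derandomization works because the mechanism sweeps through all scales, not because a single scale $k$ suffices.
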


\begin{proof}
Let $\opt = \{1,2,3,\dots,m\}$ denote the optimal set of bidders indexed in non-increasing cost order, i.e., $c_1 \geq c_2 \geq \dots c_m$.  We divide the agents from $\opt$ into disjoint subsets $Z_1, \dots, Z_{r+1}$ such that $Z_1$ contains the $\floor{\frac{B}{c_1}}$ first bidders.  For all $i \geq 2$, let $j(i)$ denote the bidder in $\opt$ of largest cost not contained in any $Z_k$ for all $k < i$.  We then may define $Z_i$ as the $\left\lfloor{\frac{B}{c_{j(i)}}}\right\rfloor$ first bidders beginning at $j(i)$ (or fewer if we exhaust all bidders).

We now proceed via case-analysis on the sets $\{Z_i\}_{i \in [r+1]}$.  First suppose that there exists some set $Z_i$ with $v(Z_i) \geq \frac{\log{\log{n}}}{10\log{n}} \cdot v(\opt)$.  We argue that the mechanism then outputs a set of value at least $\frac{\log{\log{n}}}{40\log{n}} \cdot v(\opt)$. By definition, each bidder in $Z_i$ has cost less than or equal to $c_{j(i)}$, and note that we offer price $B/k$ to all bidders in round $k$. But then we must offer prices weakly above the cost of all bidders in $Z_i$ for all rounds up to and including $|Z_i|$. Thus all of the bidders in $Z_i$ will be active at the point when the price of $\frac{B}{|Z_i|}$ is offered to all bidders in the auction. Hence, if no bidders in $Z_i$ are included in $\prevset$ we will identify a set of value at least $\frac{1}{2}v(Z_i)$ in this phase of the auction.  Since our auction obtains value equal to the set of highest value identified in any phase, we are done.  Suppose not, that is, suppose that some portion of $Z_i$ is contained in $\prevset$.  Since, in each phase, we identify some feasible set giving a $2$-approximation to the highest achievable value given the current prices, we know that $v(\prevset \cap Z_i) \leq 2v(\prevset)$.  Moreover, we know that the set $\currset$ that we select in phase $|Z_i|$ is such that $v(Z_i \setminus \prevset) \leq 2v(\currset)$.  But then, by subadditivity, we have that
	\begin{equation*}
		v(Z_i) \leq v(\prevset \cap Z_i) + v(Z_i \setminus \prevset)
		\leq 2v(\prevset) + 2v(\currset)
		\leq 4\cdot \text{max}\left\{v(\prevset), v(\currset)\right\}.
	\end{equation*}
Finally, since our auction obtains value equal to the highest value identified in any phase, we know that we obtain a set of value $\frac{\log{\log{n}}}{40\log{n}}\cdot v(\opt)$.  Thus, we obtain a $O\left(\frac{\log{n}}{\log{\log{n}}}\right)$-approximation whenever there exists some set $Z_i$ with $v(Z_i) \geq \frac{\log{\log{n}}}{10\log{n}} \cdot v(\opt)$.

We now deal with the other case.  That is, suppose that for all $i \in [r+1]$ we have that $v(Z_i) < \frac{\log{\log{n}}}{10\log{n}}\cdot v(\opt)$.  By subadditivity, we know that
\[\sum_{i = 1}^{r+1}{v(Z_i) \geq v(\opt).}\]
But then, it must be that $(r+1)\cdot\frac{\log{\log{n}}}{10\log{n}}\cdot v(\opt) > v(\opt)$, which implies that
\begin{equation}\label{eq:sizeLB}
	r > \frac{10\log{n}}{\log{\log{n}}} - 1 \geq \frac{5\log{n}}{\log{\log{n}}} \geq \frac{5\log{m}}{\log{\log{m}}}.
\end{equation}

We know that $\opt$ is budget feasible, i.e., $\sum_{i \in m}{c_i} \leq B$.  Also, by construction we have $c_{j(i)}>\frac{B}{|Z_i| + 1}$ for all $i \in [r]$.  Combining these gives
\begin{align*}
	B &\geq \sum_{j = 1}^{m}{c_j}\\
	&\geq c_1 + |Z_1| \cdot c_{j(2)} + |Z_2|\cdot c_{j(3)} + \cdots + |Z_r|\cdot c_{j(r+1)}\\
	&>\frac{B}{|Z_1| + 1} + \frac{|Z_1| \cdot B}{|Z_2| + 1} + \cdots + \frac{|Z_{r-1}| \cdot B}{|Z_{r}| + 1}.
\end{align*}
Note that the only possibly empty set is $Z_{r+1}$ by construction.  Thus, for all $i < r+1$ we know that $2|Z_i| \geq |Z_i| + 1$.  We then may conclude that
\begin{align*}
	1 &\geq \frac{1}{|Z_1| + 1} + \frac{|Z_1|}{|Z_2| + 1} + \cdots + \frac{|Z_{r-1}|}{|Z_{r}| + 1} \\
	&\geq \frac{1}{2|Z_1|} + \frac{|Z_1|}{2|Z_2|} + \cdots + \frac{|Z_{r-1}|}{2|Z_{r}|} \\
	&\geq \frac{1}{2} \cdot r \left[\frac{1}{|Z_1|} \cdot \frac{|Z_1|}{|Z_2|}\cdot \cdots \cdot \frac{|Z_{r-1}|}{|Z_{r}|}\right]^{1/r},
\end{align*} 
where the last step uses the AM-GM inequality.
Simplifying gives $2 \geq r \cdot \left(\frac{1}{|Z_r|}\right)^{1/r}$, i.e., $|Z_r| \geq \left(\frac{r}{2}\right)^r$.  On the other hand, we have that $m \geq |Z_r|$.  Combining these two with Equation \eqref{eq:sizeLB} we then have
\begin{align*}
	\log{m} &\geq r \cdot \log{\frac{r}{2}}\\
	&\geq \frac{5 \log{m}}{\log{\log{m}}} \cdot \left(\log{\log{m}} - \log{\log{\log{m}}} + \log{\frac{5}{2}}\right)
\end{align*}
which is a contradiction.
In other words, it must be that there exists some $Z_i$ with $v(Z_i) \geq \frac{\log{\log{n}}}{10\log{n}}\cdot v(\opt)$, completing the proof.
\end{proof}

\section{Conclusion}
With the auctions that we propose in this paper, we significantly improve our understanding of budget-feasible mechanism design in two important ways: 

First, our auctions achieve improved approximation factors in a \emph{deterministic} fashion and resolve one of the main open problems in the area. In contrast to some prior work that depends on randomized sampling in order to estimate the optimal value, we instead introduce a deterministic discovery process with a primal-dual flavor. We start with a low estimate of the optimal value, which we use in order to determine the initial prices offered to the bidders. Then, depending on the bidders' responses to these prices (i.e., depending on which bidders accept the prices offered to them), we update our estimate and repeat this process. This way, our auction gradually refines its estimate of the optimal value, while simultaneously discovering the appropriate prices for approximating the optimal value in a budget-feasible way. 

Second, our solutions takes the form of a \emph{clock auctions}. Unlike sealed-bid auctions, where the bidders directly report their costs to the auctioneer, clock auctions can only assess these costs indirectly, by offering a sequence of descending prices to the bidders. The price discovery process described above meets this restriction, and gradually develops a better understanding of the bidders' true costs. The fact that our solutions are clock auctions implies that they satisfies a list of highly desirable properties, making them more attractive for practical applications. Another implication, which is particularly interesting from a theoretical perspective, is that they yield non-trivial backward greedy algorithms for submodular maximization, which nicely complement the existing literature on submodular maximization, which is dominated by forward greedy algorithms.

\paragraph{Limitations of posted-price mechanisms}
To complement our positive results regarding the ability of budget-feasible clock auctions to achieve a constant factor approximation, we also considered the special class of clock auctions that take the form of posted-price mechanisms. These mechanisms approach the bidders in some order and offer them a take-it-or-leave-it price. This approach proved useful for the design of randomized clock auctions that can use sampling to estimate the optimal value~\cite{bei2017worst,amanatidis2019budget}. We were able to verify that, without the estimate that the randomized sampling provides, these mechanisms are insufficient for achieving any non-trivial approximation, suggesting that the approach of~\citet{bei2017worst} and \citet{amanatidis2019budget} could not be extended toward a deterministic solution. Due to space constraints, these results have been deferred to Appendix~\ref{sec:postedprice}.

\paragraph{Future directions}
Our results provide an optimistic view toward the design of practical budget-feasible auctions, and they give rise to interesting open problems such as the following:
\begin{itemize}
    \item Is there a separation between the performance of the best possible budget-feasible clock auction and the best possible strategyproof budget-feasible mechanism?
    \item Do there exist budget-feasible clock auctions that can achieve a constant factor approximation beyond submodular valuations (e.g., for subadditive valuations)?
\end{itemize}

Regarding the first question, there is no known separation between clock auctions and general strategyproof mechanisms, even for interesting special classes of valuations, such as additive ones. Note that the best known approximation guarantees (for both randomized and deterministic auctions), for the case of additive valuations, are currently due to \citet{gravin2019optimal}. Although these auctions are presented as sealed-bid mechanisms, we were able to verify that they are one of the few examples that can also be implemented as clock auctions. As a result, for the special case of additive valuations, the state of the art approximations can be achieved by clock auctions.

Regarding the second question, for the more general class of strategyproof mechanisms, we know that there exists a constant factor mechanism, through a non-constructive argument, based on Yao's lemma due to~\citet{bei2017worst}. Therefore, designing a specific strategyproof auction that achieves this guarantee remains open. However, focusing our attention on the more restrictive class of clock auctions can help us gain some traction on this problem. For example, this restriction would make it more tractable to prove larger lower bounds; something that would have been much more demanding for the richer class of strategyproof mechanisms.

\newpage

\begin{appendices}
\section{Proofs missing from Section~\ref{sec:monotone}}
\label{sec:appendixmonotone}

\subsection{Proof of Theorem~\ref{thm:4.75}}\label{subsec:thm1}
\begin{proof}
\mechone\ is clearly deterministic. Next, note that the sequence of prices offered to a bidder $i$ is descending since at each update of $p_i$, it is the minimum of the previous price $p_i$ and another price. Moreover, once a bidder rejects a price, it exits the auction and is not considered anymore. Thus, \mechone\ is a clock-auction. 

Throughout the proof, we assume $\hat{t}\geq 3$ and $\bsecondset$ is budget-feasible, i.e., $\bsecondset = \secondset$. We show our auction actually achieves a better approximation in the cases where $\hat{t}< 3$ or $\bsecondset$ is not budget feasible 
in Appendices~\ref{subsec:teq2} and \ref{subsec:W2budgetfeasible}, respectively.

Let $\firstset, \secondset, $ and $\thirdset$ denote the sets defined in the \textsc{Maximize-Value} algorithm. We use $\textsc{Benchmark}$ to refer to the value of $v(\firstset\cup \secondset) + \marginal{\R}{\firstset\cup \secondset}$, with the assumption $\secondset = \bsecondset$.  By submodularlity and monotonicity, and since the optimal solution needs to be budget feasible, we have that $\OPT \leq v(\firstset \cup \secondset) + \marginal{R}{\firstset \cup \secondset}.$
 Then, to prove that \mechone\ gives a $\rho$ approximation it is sufficient to show that

\begin{equation*}
\frac{v(\firstset\cup \secondset) + \marginal{\R}{\firstset \cup \secondset}}{\max\{v(\firstset),  v(\thirdset)\}} ~\leq~ \rho. \end{equation*}

Assume, for contradiction, the negation of the above inequality holds true, then it must be that $v(\firstset),v(\thirdset)$ both have value less than $\frac{1}{\rho}$ times \textsc{Benchmark}. We show that for any $\rho \geq 4.75$ this assumption leads to a contradiction. For notational simplicity, we use $\firstconstant$ and $\rconstant$ to denote the constants for which $v(\firstset) = \firstconstant \gOPT_{\hat{t}}$ and $\marginal{\R}{\firstset\cup \secondset} = \rconstant\gOPT_{\hat{t}}$.

\textbullet~ First, from the fact that $v(\firstset)$ is strictly less than $\frac{1}{\rho}$ of the \textsc{Benchmark}, we get
\begin{equation}
v(\firstset)= \firstconstant\gOPT_{\hat{t}} < \frac{1}{\rho} (v(\combinedset) + \marginal{\R_b}{\combinedset}) ~\Rightarrow~
\frac{v(\combinedset)}{\gOPT_{\hat{t}}} > \left(\rho\firstconstant-\rconstant\right).
\end{equation}

\textbullet~ Then, since $v(\thirdset)$ is strictly less than $\frac{1}{\rho}$ of the \textsc{Benchmark}, and $v(\thirdset)\geq v(\secondset)$, we get

\begin{equation}\label{eq:W2valueBound21}
v(\secondset) \leq v(\thirdset)  <\frac{1}{\rho}\left(v(\combinedset) + \rconstant\gOPT_{\hat{t}}\right) ~\Rightarrow~ v(\combinedset) > \rho v(\secondset)-\rconstant\gOPT_{\hat{t}}.
\end{equation}
The marginal contribution of each bidder $i\in \secondset$ in the order that they were added is at least $\frac{p_i \gOPT_{\hat{t}}}{B}$  so $v(\secondset) \geq \frac{ \gOPT_{\hat{t}}}{B} \sum_{i\in \secondset} p_i$. Thus if we let $u=1-\frac{\sum_{i\in\secondset}p_i}{B}$ be the fraction of the budget left unused by $\secondset$, by Inequality~\eqref{eq:W2valueBound21} we have
\begin{equation}
v(\combinedset) > \rho (1-u) \gOPT_{\hat{t}} -\rconstant\gOPT_{\hat{t}} ~\Rightarrow~ \frac{v(\combinedset)}{\gOPT_{\hat{t}}} > \rho(1-u) - \rconstant 
\end{equation}

\textbullet~ Furthermore, for the value of $\thirdset$, using submodularity, we get:
\[v(\thirdset)= \marginal{\secondset}{T} +v(T) \geq \marginal{\secondset}{\firstset}+v(T) = v(\firstset \cup \secondset) - \firstconstant\gOPT_{\hat{t}} + v(T)
\]
Using the fact that $v(\thirdset)$ is less than $\frac{1}{\rho}$ of the \textsc{Benchmark} once again, we get
\begin{equation}\label{eq:W3Bound2a1}
v(\firstset \cup \secondset) -\firstconstant\gOPT_{\hat{t}} + v(T) ~<~
\frac{1}{\rho} (v(\firstset\cup \secondset) + \rconstant\gOPT_{\hat{t}})
\end{equation}
Also, note that for every bidder $i$ we have $v(\{i\})\leq \gOPT_1$ (by definition of $\gOPT_1$), so $v(\{i\})\leq \frac{\gOPT_{t}}{2^{t-1}}$ for every $t \geq 1$. 
Let $T'$ be the shortest prefix of $\firstset$ such that $\sum_{i\in T'}p_i>uB$, i.e., the prefix whose current prices exceed a $u$ fraction of the budget. As each of these bidders was added to $S_{\hat{t}-1}$ in phase $\hat{t}-1$, the ratio of their marginal contribution over the price that they were offered was at least $\frac{\gOPT_{\hat{t}-1}}{B}$, so their total value, $v(T')$ is at least $u\gOPT_{\hat{t}-1}$. If we remove the last bidder from $T'$, we retrieve the set $T$ (the longest prefix of $\firstset$ whose prices add up to at most $uB$ and, hence, can be afforded in addition to $\secondset$). Since that bidder's marginal contribution is at most $\frac{\gOPT_{\hat{t}-1}}{2^{\hat{t}-2}}$ the value of $T$ must be at least
\[v(T) ~\geq~ \left(u - \frac{1}{2^{\hat{t}-2}}\right)\gOPT_{\hat{t}-1} ~=~ \left(u - \frac{1}{2^{\hat{t}-2}}\right)\frac{\gOPT_{\hat{t}}}{2}.\]
Substituting this into~\eqref{eq:W3Bound2a1} gives 
\begin{align}
\left(1-\frac{1}{\rho}\right)v(\combinedset) &< \frac{\rconstant}{\rho}+\firstconstant-v(T) < \frac{\rconstant}{\rho}+\firstconstant-\left(u-\frac{1}{2^{\hat{t}-2}}\right)\frac{\gOPT_{\hat{t}}}{2} ~\Rightarrow \nonumber\\
\frac{v(\firstset\cup \secondset)}{\gOPT_{\hat{t}}} ~&<~ \frac{2\rho \firstconstant+2 \rconstant-\rho u+\frac{\rho}{2^{\hat{t}-2}}}{2\rho -2} 
\end{align}

In summary, the constraints that we get from the assumption that $v(\firstset)$, $v(\secondset)$, and $v(\thirdset)$ are not high enough are the following three:
\begin{align*}
   \frac{v(\firstset\cup \secondset)}{\gOPT_{\hat{t}}} &~>~ \rho \firstconstant -\rconstant.\\
   \frac{v(\firstset\cup \secondset)}{\gOPT_{\hat{t}}} &~>~ \rho (1-u)-\rconstant.\\
   \frac{v(\firstset\cup \secondset)}{\gOPT_{\hat{t}}} &~<~ \frac{2\rho \firstconstant+2 \rconstant-\rho u+\frac{\rho}{2^{\hat{t}-2}}}{2\rho -2}.
\end{align*}

The rest of the proof shows that these three constraints are incompatible, leading to a contradiction. We first show that it suffices to prove the constraints are incompatible when $u=1-\firstconstant$: we break into casework depending on whether $\firstconstant\ge 1-u$ or if $\firstconstant\le 1-u$.

     \textbullet~ For the first case, assume $\firstconstant \geq 1-u$. Then the second constraint becomes redundant and the third constraint is least restrictive when $u$ is minimized, we can therefore without loss of generality assume that $u=1-\firstconstant$.

    \textbullet~ Now assume that $\firstconstant \leq 1-u$. In this case the first lower bound becomes redundant and we can focus on the second lower bound and the upper bound. Take any values of $\firstconstant, \rconstant$, and $u$ and let $\mathcal{L}$ be the value of the lower bound and $\mathcal{U}$ be the value of the upper bound for this choice of $\firstconstant, \rconstant$, and $u$. If the constraints are compatible, i.e., $\mathcal{L}\leq \mathcal{U}$, then we note that increasing the value of $u$ by some $\delta>0$ (while keeping $\firstconstant$ and $\rconstant$ fixed), then the new lower bound would be equal to $\mathcal{L}-\rho \delta$ and the new upper bound would be $\mathcal{U}-\frac{\rho}{2\rho-2}\delta$. It is easy to verify that as long as $\rho> 1.5$, the distance between the upper bound the lower bound would increase. From prior work, we know that an approximation factor of $\sqrt{2}+1$  or better is not possible even for additive valuations \cite{chen2011approximability}, so we can safely assume that $\rho > 1.5$. Therefore, the constraints are least restrictive if we let $u$ take the largest possible value which, for this case, is once again equal to $1-\firstconstant$.
    
    With the case analysis above, we have shown that proving that the constraints are incompatible when $u=1-\firstconstant$ implies that they are incompatible in general.
    Substituting $u = 1-\firstconstant$ in the upper bound, it becomes $\frac{3\rho \firstconstant+2 \rconstant-\rho+\frac{\rho}{2^{\hat{t}-2}}}{2\rho -2}$. Combining it with the lower bound, we get
    \begin{align*}
    \rho \firstconstant -\rconstant &< \frac{3\rho \firstconstant+2 \rconstant-\rho+\frac{\rho}{2^{\hat{t}-2}}}{2\rho -2} ~~\Rightarrow \\
    2\rho^2 \firstconstant-2\rho \rconstant -2\rho \firstconstant +2\rconstant &< 3\rho \firstconstant+2 \rconstant -\rho+\frac{\rho}{2^{\hat{t}-2}} ~~\Rightarrow\\
    (2\rho^2-5\rho) \firstconstant-2\rho \rconstant &< -\rho+\frac{\rho}{2^{\hat{t}-2}}
    \end{align*}
    
    From Lemma~\ref{lem:lastelement} we get $  v(\firstset) \geq\frac{2^{\hat{t}-2}}{2^{\hat{t}-2}+1}\gOPT_{\hat{t}-1}=\frac{2^{\hat{t}-2}}{2^{\hat{t}-1}+2}\gOPT_{\hat{t}}$, therefore $\firstconstant \geq \frac{2^{\hat{t}-2}}{2^{\hat{t}-1}+2}$. We also have that $\marginal{R}{\firstset \cup \secondset} \leq \left(\frac{3}{2}- \frac{1}{2^{\hat{t}-1}}\right)\gOPT_{\hat{t}}$ from Lemma~\ref{lem:main}, i.e., $\rconstant \leq \frac{3}{2}- \frac{1}{2^{\hat{t}-1}}$. Substituting the lower bound for $\firstconstant$ and the upper bound for $\rconstant$ we get
    \begin{align*}
    (2\rho^2-5\rho) \left(\frac{2^{\hat{t}-2}}{2^{\hat{t}-1}+2}\right)-2\rho \left(\frac{3}{2}-\frac{1}{2^{\hat{t}-2}}\right) &< -\rho+\frac{\rho}{2^{\hat{t}-2}} ~~\Rightarrow\\
    (2\rho^2-5\rho) 2^{2\hat{t}-4}-2\rho \left(3(2^{\hat{t}-1}+2)2^{\hat{t}-3}-(2^{\hat{t}-1}+2)\right) &< -\rho(2^{\hat{t}-1}+2)2^{\hat{t}-2}+\rho (2^{\hat{t}-1}+2) ~~\Rightarrow\\
    (2\rho^2-5\rho) 2^{2\hat{t}-4}-2\rho \left(3(2^{2\hat{t}-4}+2^{\hat{t}-2})-(2^{\hat{t}-1}+2)\right) &< -\rho(2^{2\hat{t}-3}+2^{\hat{t}-1})+\rho (2^{\hat{t}-1}+2) ~~\Rightarrow\\
    (2\rho^2-9\rho) 2^{2\hat{t}-4}- 2^{\hat{t}-1}\rho+2\rho &< 0.
    \end{align*}
Further simplifying the inequality, 
    \begin{align*}
        (2\rho^2-9\rho) 2^{2\hat{t}-4} < 2^{\hat{t}-1}\rho-2\rho &~~\Rightarrow~~ (2\rho-9) 2^{2\hat{t}-4} < 2^{\hat{t}-1}-2\\
        2\rho - 9 < \frac{1}{2^{\hat{t}-3}} - \frac{1}{2^{2\hat{t}-5}} &~~\Rightarrow~~ \rho < \frac{1}{2^{\hat{t}-2}} - \frac{1}{2^{2\hat{t}-4}}+4.5,
    \end{align*}
    where the term $\frac{1}{2^{\hat{t}-2}} - \frac{1}{2^{2\hat{t}-4}}$ is maximized at $\hat{t}=3$, at which we have:
    \[\rho < 0.25+4.5 = 4.75.\]
    
    Therefore, for any $\rho \geq 4.75$ it is impossible to satisfy all the three constraints listed above, leading to a contradiction. 
\end{proof}

\subsection{The Analysis of the Approximation for $\hat{t}\leq 2$}\label{subsec:teq2}
\begin{lemma}\label{lem:teq2}
If $\hat{t} \leq 2$, Then \textsc{Iterative-pruning} auction would guarantee a 4 approximation of the optimal value.
\end{lemma}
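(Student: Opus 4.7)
The plan is to handle the two values of $\hat{t}$ separately. The case $\hat{t}=1$ is trivial: the outer loop not executing at initialization forces $|\bidders|=1$, so the mechanism returns the unique bidder, which is optimal. The substance of the proof lies in $\hat{t}=2$.

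For $\hat{t}=2$ I would start from the benchmark $\OPT \leq v(\firstset\cup\bsecondset)+\marginal{R}{\firstset\cup\bsecondset}$. The first step is to control the rejection loss. A crucial observation is that phase~$1$ never rejects any bidder (its inner loop does not run), so every $j\in R$ was rejected during phase~$2$; the rejection condition at the time of $j$'s removal together with submodularity gives $\marginal{\{j\}}{\firstset\cup\bsecondset}\leq c_j\gOPT_2/B$, and summing over the budget-feasible set $R$ yields $\marginal{R}{\firstset\cup\bsecondset}\leq \gOPT_2$. I would then bound $v(\firstset\cup\bsecondset)$ by splitting on whether $v(S_2)\geq \gOPT_2$ (the infeasible subcase) or not.

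In the infeasible subcase, $\secondset=S_2\setminus\{i_2\}$ and Lemma~\ref{lem:lastelement} at $t=2$ gives $v(\secondset)\geq \tfrac{2}{3}\gOPT_2$. Submodularity combined with $v(\{i_2\})\leq \gOPT_1=\gOPT_2/2$ then yields $v(\firstset\cup\bsecondset)\leq v(\secondset)+\gOPT_2$, so $\OPT \leq v(\secondset)+2\gOPT_2 \leq 4\,v(\secondset)\leq 4\,v(\thirdset)$. In the feasible subcase $\secondset=\bsecondset$, subadditivity gives $v(\firstset\cup\bsecondset)\leq \gOPT_1+v(\secondset)$ and hence $\OPT \leq v(\secondset)+3\gOPT_1$; a final split on whether $v(\secondset)\geq \gOPT_1$ then yields $\OPT\leq 4\max\{v(\firstset),v(\thirdset)\}$ in either case.

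The hard part is that the general argument in Lemma~\ref{lem:main} implicitly assumes $\hat{t}\geq 3$: the sum bounding $v(R_a)$ runs over an empty index range when $\hat{t}=2$, and the stated formula $(3/2-1/2^{\hat{t}-2})\gOPT_{\hat{t}}$ evaluates to $\tfrac{1}{2}\gOPT_2$, which is not a valid upper bound in this boundary regime. I would sidestep this by arguing directly that $R_a=\emptyset$ (since phase-$1$ rejections cannot exist) and re-deriving the $\gOPT_2$ bound on phase-$2$ rejections from first principles. Beyond this, the calculation is routine, aided by the observation that $\bfirstset$ contains only the singleton bidder from $S_1$ at price $B$, which makes the mechanism's budget-feasibility adjustment on $\bfirstset$ a no-op and collapses the structure of $\thirdset$.
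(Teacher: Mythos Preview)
Your argument is correct and rests on the same two observations as the paper's proof: for $\hat{t}=2$ no rejection can occur in phase~$1$, so $\marginal{R}{\firstset\cup\bsecondset}\leq\gOPT_2$, and $\bfirstset=S_1$ is the single highest-value bidder at price exactly $B$, so $v(\firstset)=\gOPT_1=\gOPT_2/2$ with no pruning. The difference is packaging. The paper handles only the case where $\bsecondset$ is budget feasible and defers the infeasible case to Lemma~\ref{lem:budgetW2}; in the feasible case it simply writes $\OPT\leq 2\max\{v(\firstset),v(\secondset)\}+\gOPT_2$ and divides by $\max\{v(\firstset),v(\secondset)\}\geq\gOPT_2/2$ to get~$4$ in one line. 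You instead absorb both cases into this lemma, invoking Lemma~\ref{lem:lastelement} at $t=2$ to secure $v(\secondset)\geq\tfrac{2}{3}\gOPT_2$ when $\bsecondset$ overruns the budget, and then doing a slightly finer case split on $v(\secondset)$ versus $\gOPT_1$. Your route is more self-contained, which is a genuine advantage here since Lemma~\ref{lem:budgetW2} as proved in the paper only concludes for $\hat{t}\geq 3$. One minor wording issue: your ``infeasible subcase'' is really the case $v(S_2)\geq\gOPT_2$, which does not by itself force $\bsecondset$ to be budget-infeasible; but when $\bsecondset$ is in fact feasible in that case you get $\secondset=S_2$ and $v(\secondset)\geq\gOPT_2$, so your downstream inequalities still go through.
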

\begin{proof}
From Lemma \ref{lem:main} we bound the optimal solution as follows \[\OPT \leq v(\firstset \cup \secondset) + \marginal{R}{\firstset \cup \secondset} \leq 2\max\{v(\firstset), v(\secondset)\}+\marginal{R}{\firstset \cup \secondset}\] 

 When $\hat{t}=2$, $\firstset$ is budget feasible without removing any of its bidders since $\firstset$ is initialized to $S_1$ which is simply the maximum value item offered the entire budget. We can also assume $\secondset$ is budget feasible or else we get a 3.75 approximation by Appendix \ref{subsec:W2budgetfeasible}. Thus we can simply pick the greater value set out of $\firstset$ and $\secondset$, so to get a 4 approximation it suffices to show 

 \[\frac{2\max\{v(\firstset), v(\secondset)\}+\marginal{R}{\combinedset}}{\max\{v(\firstset), v(\secondset)\}} \le 4  \]
 
 In this case, following the analysis of  $\marginal{R}{\firstset \cup \secondset}$ in Lemma \ref{lem:main} we get a stronger bound of $\marginal{R}{\combinedset} \le \gOPT_{\hat{t}}$ when $\hat{t}=2$. This is because the auction only consists of two phases so we can omit the $(\frac{1}{2}-\frac{1}{2^{\hat{t}-2}})\gOPT_{\hat{t}}$ loss from $\R_a$ from rounds before $\hat{t}-1$. We also know $v(\firstset) = \frac{1}{2}\gOPT_{\hat{t}}$ since no bidders had to be removed from $\firstset$ to make it budget feasible. Therefore for $\hat{t}=2$, 
 
 \[\frac{2\max\{v(\firstset), v(\secondset)\}+\marginal{R}{\combinedset}}{\max\{v(\firstset), v(\secondset)\}} \le \frac{\gOPT_{\hat{t}}+\gOPT_{\hat{t}}}{\frac{1}{2}\gOPT_{\hat{t}}} = 4 \]
 
\end{proof}

\subsection{The Analysis of the Approximation for Non-Budget-Feasible $\bsecondset$}\label{subsec:W2budgetfeasible}
\begin{lemma}\label{lem:budgetW2}
If $\bsecondset$ is not budget feasible, then \textsc{Iterative-Pruning} auction would guarantee a 3.75 approximation of the optimal value by outputting $W_3$.
\end{lemma}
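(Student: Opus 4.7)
The plan is to upper-bound $\OPT$ and lower-bound $v(W_3)$, both in terms of $\gOPT_{\hat{t}}$, and then to verify that their ratio never exceeds $3.75$.

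I would begin from the standard inequality $\OPT \leq v(\firstset \cup \bsecondset) + \marginal{\R}{\firstset \cup \bsecondset}$ and apply subadditivity to obtain $v(\firstset \cup \bsecondset) \leq v(\firstset) + v(\bsecondset)$. Each of $v(\firstset)$ and $v(\bsecondset)$ can exceed its phase threshold ($\gOPT_{\hat{t}-1}$ and $\gOPT_{\hat{t}}$ respectively) by at most the value of one bidder, which by the definition of $\gOPT_1$ is at most $\gOPT_1 \leq \gOPT_{\hat{t}}/2^{\hat{t}-1}$. Combining this with Lemma~\ref{lem:main}'s bound $\marginal{\R}{\firstset \cup \bsecondset} \leq (3/2 - 1/2^{\hat{t}-2})\gOPT_{\hat{t}}$, the low-order terms cancel to yield $\OPT \leq 3\gOPT_{\hat{t}}$. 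For the lower bound, the hypothesis that $\bsecondset$ is not budget feasible implies $\sum_{i \in \bsecondset} p_i > B$; combined with the pricing rule $p_i \leq \marginal{\{i\}}{\cdot} \cdot B/\gOPT_{\hat{t}}$, this forces $v(\bsecondset) > \gOPT_{\hat{t}}$. Lemma~\ref{lem:lastelement} then implies that removing the last-added bidder from $\bsecondset$ produces a budget-feasible set of value at least $\frac{2^{\hat{t}-1}}{2^{\hat{t}-1}+1}\gOPT_{\hat{t}}$, which must be contained in $W_2$; hence $v(W_3) \geq v(W_2) \geq \frac{2^{\hat{t}-1}}{2^{\hat{t}-1}+1}\gOPT_{\hat{t}}$ by monotonicity.

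Combining the two bounds gives $\OPT/v(W_3) \leq 3 \cdot (2^{\hat{t}-1}+1)/2^{\hat{t}-1} = 3 + 3/2^{\hat{t}-1}$, which is at most $3.75$ for all $\hat{t} \geq 3$. The case $\hat{t}=2$ I would handle by plugging in the exact value $v(\firstset) = \gOPT_1 = \gOPT_2/2$ (since $\firstset = S_1$ is a singleton and the budget-feasibility adjustment of line~\ref{algline:newprice} cannot trigger for $\hat{t}=2$); this tightens the upper bound to $\OPT \leq 5\gOPT_2/2$, and combined with $v(W_2) \geq 2\gOPT_2/3$ yields the ratio $15/4 = 3.75$ once more.

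The main obstacle is the subcase (for $\hat{t} \geq 3$) in which $\lastbidder$ was moved from $\firstset$ into $\bsecondset$ during the budget-feasibility adjustment, so $\bsecondset = S_{\hat{t}} \cup \{\lastbidder\}$ and Lemma~\ref{lem:lastelement} cannot be invoked directly on $\bsecondset$, since $\lastbidder$ was not ordered by the phase's decreasing-marginal rule. I would split this into further subcases depending on the status of $S_{\hat{t}}$. If $S_{\hat{t}}$ is itself not budget feasible, Lemma~\ref{lem:lastelement} applies to $S_{\hat{t}}$ directly and the main argument carries through. If $S_{\hat{t}}$ is budget feasible with $v(S_{\hat{t}}) \geq \gOPT_{\hat{t}}$, then $W_2 = S_{\hat{t}}$ already satisfies $v(W_2) \geq \gOPT_{\hat{t}}$. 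In the remaining subcase ($S_{\hat{t}}$ budget feasible and $v(S_{\hat{t}}) < \gOPT_{\hat{t}}$, so it was $\lastbidder$ alone that pushed $\bsecondset$ past the threshold), the weaker bound $v(W_2) > \gOPT_{\hat{t}}(1 - 1/2^{\hat{t}-1})$ is offset by the sharper upper bound $v(\bsecondset) < \gOPT_{\hat{t}} + \gOPT_1$ (saving one factor of $\gOPT_1$ in the $\OPT$ estimate), so that the resulting ratio $(3 - 1/2^{\hat{t}-1})/(1 - 1/2^{\hat{t}-1})$ remains below $3.75$ for all $\hat{t} \geq 3$.
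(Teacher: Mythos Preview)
Your approach mirrors the paper's: both upper-bound $\OPT \leq v(\firstset)+v(\bsecondset)+\marginal{R}{\firstset\cup\bsecondset}$ using subadditivity and Lemma~\ref{lem:main}, lower-bound $v(W_3)\geq v(W_2)$ via Lemma~\ref{lem:lastelement}, and check that the ratio is at most $3.75$. You are in fact more careful than the paper in one respect: the paper's invocation of Lemma~\ref{lem:lastelement} tacitly assumes that $S_{\hat t}$ itself is not budget feasible, whereas you explicitly dispatch the boundary sub-case where $j^*$ was moved into $\bsecondset$ but $S_{\hat t}$ alone is budget feasible with $v(S_{\hat t})<\gOPT_{\hat t}$; your separate handling of $\hat t=2$ is likewise an addition, since the paper's proof of this lemma only runs for $\hat t\geq 3$ and defers $\hat t\leq 2$ to Lemma~\ref{lem:teq2}.
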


\begin{proof}
First note that by definition we have $v(W_3) \geq v(\secondset)$.
From Lemma~\ref{lem:main} we can upper bound the optimal solution as follows:
\[\OPT \leq v(\firstset \cup \bsecondset) + \marginal{R}{\firstset \cup \bsecondset} \leq v(\firstset) + v(\bsecondset) + \marginal{R}{\firstset \cup \bsecondset}\]

To upper bound the value of $\bsecondset$, notice that the maximum possible value of $v(\bsecondset)$ is obtained if, during the construction of $S_{\hat{t}}$, adding the last bidder makes $\bsecondset$ exceed the budget. Further, the last bidder added to $S_{\hat{t}-1}$, $j^*$, accepted the new price and was added to $\bsecondset$. Removing the two bidder from $\bsecondset$ we get $\secondset$.  The largest budget-feasible prefix of $\bsecondset$, is bounded by $\gOPT_{\hat{t}}$, i.e., $v(\secondset) < \gOPT_{\hat{t}}$. Note that for every bidder $i$ we have $v(\{i\}) \leq \gOPT_1 = \frac{\gOPT_t}{2^{t-1}}$ for every $t\geq 1$. Then the value of $\bsecondset$ should be no more than $\secondset$ combined with two extra bidders, by submodularity the value of $\bsecondset$ is at most
\[v(\bsecondset) \leq v(\secondset)+2\gOPT_1 < \left(1+\frac{1}{2^{\hat{t}-2}}\right)\gOPT_{\hat{t}}. \]

By lemma~\ref{lem:lastelement} we have:

\[v(W_3) \geq v(\secondset) \geq \frac{2^{\hat{t}-1}}{2^{\hat{t}-1}+1}S_{\hat{t}} \geq  \frac{2^{\hat{t}-1}}{2^{\hat{t}-1}+1}\gOPT_{\hat{t}}.\]

And by lemma~\ref{lem:main}, we also get the the marginal contribution of $R$ is
\[\marginal{R}{\firstset \cup \bsecondset} \leq \left(\frac{3}{2}-\frac{1}{2^{\hat{t}-2}}\right)\gOPT_{\hat{t}}.\]

Therefore the approximation would be:

\[\frac{\OPT}{v(W_3)}\leq  \frac{v(\firstset) + v(\bsecondset) + \marginal{R}{\firstset \cup \bsecondset}}{v(\secondset)} < \frac{\frac{1}{2}+1+\frac{1}{2^{\hat{t}-2}}+\frac{3}{2}-\frac{1}{2^{\hat{t}-2}}}{\frac{2^{\hat{t}-1}}{2^{\hat{t}-1}+1}}~~\Rightarrow\]

\[\frac{\OPT}{v(W_3)}< \frac{3}{\frac{2^{\hat{t}-1}}{2^{\hat{t}-1}+1}} \leq 3.75.\]
for any $\hat{t}\geq 3$.
\end{proof}

\subsection{Budget feasibility of \mechone}\label{subsec:budegtfeasible}
\begin{lemma} 
\label{lem:budgetfeasible}
\mechone \ is a budget-feasible mechanism.
\end{lemma}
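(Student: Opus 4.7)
The plan is to show that both candidate outputs considered by \textsc{Maximize-Value}, namely $\bfirstset$ (after the possible removal of $\lastbidder$) and $W_3 = W_2 \cup T$, are budget feasible with respect to the final prices. Since the mechanism returns one of these two sets, budget feasibility of the overall output then follows immediately. The set $W_3$ is budget feasible by construction: $W_2$ is explicitly defined as the longest budget-feasible prefix of $\bsecondset$, and $T$ is the longest prefix of $\bfirstset$ for which $W_2 \cup T$ remains within the budget, so the selection rule directly guarantees $\sum_{i \in W_3} p_i \leq B$.

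For $\bfirstset$, the only nontrivial case is when $\sum_{i \in S_{\hat{t}-1}} p_i > B$ after the outer while loop, where $\hat{t}$ denotes the final phase. In that case the mechanism removes $\lastbidder$, the last bidder added to $S_{\hat{t}-1}$, from $\bfirstset$. My plan is to bound $\sum_{i \in \bfirstset} p_i$ via a telescoping argument. Every price update has the form $p_i \leftarrow \min\{p_i, \marginal{\{i\}}{S_{\hat{t}-1}^{(i)}} \cdot B / \gOPT_{\hat{t}-1}\}$ when $i$ is considered in phase $\hat{t}-1$, where $S_{\hat{t}-1}^{(i)}$ denotes the state of $S_{\hat{t}-1}$ immediately before $i$ was added. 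Crucially, every bidder in $S_{\hat{t}-1}$ is excluded from the consideration set $A \setminus (S_{\hat{t}-1} \cup S_{\hat{t}})$ in phase $\hat{t}$, so their prices are never updated again. Thus the final price of any $i \in S_{\hat{t}-1}$ is at most $\marginal{\{i\}}{S_{\hat{t}-1}^{(i)}} \cdot B / \gOPT_{\hat{t}-1}$, and summing over $i \in S_{\hat{t}-1} \setminus \{\lastbidder\}$ and telescoping the marginals gives
\[\sum_{i \in \bfirstset} p_i \;\leq\; \frac{B}{\gOPT_{\hat{t}-1}} \cdot v\bigl(S_{\hat{t}-1} \setminus \{\lastbidder\}\bigr).\]

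The remaining ingredient is the inequality $v(S_{\hat{t}-1} \setminus \{\lastbidder\}) < \gOPT_{\hat{t}-1}$, which follows from the termination condition of the inner while loop in phase $\hat{t}-1$: the loop kept running up to the point when $\lastbidder$ was added, so immediately before this addition the value of the set was still strictly below the target $\gOPT_{\hat{t}-1}$. Combining these two bounds yields $\sum_{i \in \bfirstset} p_i < B$, completing the argument. The only subtle point, and thus the main obstacle, is correctly lining up the ``min'' in the price update rule with the telescoping bound by verifying that no bidder in $S_{\hat{t}-1}$ receives a further price reduction in phase $\hat{t}$ or in the pruning step (which only touches $\lastbidder$); but this is immediate from the active-set management in the code.
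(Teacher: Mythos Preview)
Your proof is correct and follows essentially the same approach as the paper: budget feasibility of $W_3$ is immediate from its construction, and for $W_1$ you use the same telescoping bound $\sum_{i\in W_1} p_i \le \frac{B}{\gOPT_{\hat t-1}}\, v(S_{\hat t-1}\setminus\{\lastbidder\}) < B$, invoking the inner-loop termination condition to get $v(S_{\hat t-1}\setminus\{\lastbidder\}) < \gOPT_{\hat t-1}$. Your extra remark that bidders in $S_{\hat t-1}$ receive no further price updates in phase $\hat t$ is accurate but not strictly needed for the inequality, since the $\min$ in the update rule already guarantees $p_i \le \marginal{\{i\}}{S_{\hat t-1}^{(i)}}\cdot B/\gOPT_{\hat t-1}$ at all later times.
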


\begin{proof}
Since we return either $\firstset,\secondset,$ or $\thirdset$, it suffices to show each set is budget feasible. 

$\firstset$ is initialized to $S_{\hat{t}-1}$ at the end of the last phase. If the last bidder $\lastbidder$ added to $\firstset$ during phase $\hat{t}-1$ was not removed from $\firstset$ by the mechanism, then by the condition of the if statement, 
$\sum_{i \in \firstset} p_i \leq B$ and $\firstset$ is budget feasible.

Otherwise,  $\firstset = S_{\hat{t}-1}\setminus \{\lastbidder\}$ and, by the condition of the inner while loop, we have $v(S_{\hat{t}-1} \setminus \{\lastbidder\}) < \gOPT_{\hat{t}-1}$. Let $S_{\hat{t}-1}^i$ be the set $\Scurr$ at the beginning of the iteration of phase $\hat{t}-1$ where bidder $i$ is considered. The prices $p^{\star}_i$ paid to each $i \in \firstset$ are the last price they were offered, so  we have $p^{\star}_i \leq p_i^t = \marginal{i}{S_{\hat{t}-1}^i} \cdot \frac{B}{\gOPT_{\hat{t}-1}}$ where $p_i^{\hat{t}-1}$ is the price $p_i$ at phase ${\hat{t}-1}$. We get that
$$\sum_{i \in \firstset  }p^{\star}_i \leq \sum_{i \in \firstset \setminus \{\lastbidder\} } \marginal{i}{\firstset} \cdot \frac{B}{\gOPT_{\hat{t}-1}} = v(\firstset \setminus \{\lastbidder\} ) \cdot \frac{B}{\gOPT_{\hat{t}-1}} \leq B$$
and $\firstset$ is budget feasible.

$\secondset$ is budget feasible by definition as it is the largest budget feasible prefix of $\bsecondset$. Similarly $\thirdset$ is also budget feasible by definition as it is $\secondset$ joined with the largest prefix of $\firstset$ that will not exceed the available budget not used by $\secondset$. Thus each of $\firstset, \secondset$, and $\thirdset$ are budget feasible sets, making \textsc{Iterative-Pruning} a budget feasible mechanism. 
\end{proof}

\subsection{Running time of \mechone}\label{subsec:polytimemonotone}
\begin{lemma} 
\label{lem:polytimemonotone}
\mechone has $O(n^2 \log{n})$ running time.
\end{lemma}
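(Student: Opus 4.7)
The plan is to decompose the running time of \mechone\ into three components: the number of phases of the outer while loop, the total number of inner-loop iterations across all phases, and the per-iteration cost (together with the post-processing via \textsc{Maximize-Value}).

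First, I would bound the number of phases $\hat{t}$. Since $\gOPT_1 = \max_{i \in \bidders} v(\{i\})$ and monotone submodularity implies subadditivity, we have $v(S) \leq \sum_{i \in S} v(\{i\}) \leq n \cdot \gOPT_1$ for every $S \subseteq \bidders$. Because the target doubles each phase, $\gOPT_t = 2^{t-1}\gOPT_1 > n \cdot \gOPT_1$ whenever $t > 1 + \log_2 n$. Once the target is unreachable, the inner while loop can terminate only via exhaustion of $A \setminus (S_{t-1} \cup S_t)$, which simultaneously falsifies the outer loop's guard. Hence $\hat{t} = O(\log n)$.

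Next, I would bound the total number of inner-loop iterations. Each iteration either adds one bidder to the current $S_t$ or permanently removes one bidder from $A$. The total number of eliminations across the entire run is at most $n$, since $A$ only shrinks. Within any single phase, each candidate bidder can be added to $S_t$ at most once, so a phase contributes at most $n$ additions; combined with the $O(\log n)$ bound on the phase count, the total number of additions is $O(n \log n)$. Summing yields $O(n \log n)$ inner-loop iterations in total.

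Finally, each inner iteration is dominated by the argmax search over at most $n$ candidate bidders, which requires one value-oracle query per candidate (to compute $v(S_t \cup \{i\})$; the value $v(S_t)$ is cached across the search), giving a per-iteration cost of $O(n)$. The post-loop budget-feasibility adjustment in lines~\ref{algline:prunestart}--\ref{algline:pruneend} and the call to \textsc{Maximize-Value} only perform linear scans over $\bfirstset \cup \bsecondset$ and cost an additional $O(n)$. Multiplying gives a total running time of $O(n \log n) \cdot O(n) + O(n) = O(n^2 \log n)$, as claimed. The one subtle step is the phase-count bound, which exploits the interplay between the doubling schedule of $\gOPT$ and the initialization $\gOPT_1 = v(S_1)$ together with the subadditivity implied by monotone submodularity; everything else is routine accounting.
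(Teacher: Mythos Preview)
Your proof is correct and follows essentially the same approach as the paper: bound the number of phases by $O(\log n)$ via the doubling schedule and the subadditivity bound $v(S)\le n\cdot\gOPT_1$, bound the inner-loop work by $O(n)$ iterations per phase at $O(n)$ cost each, and observe that the post-processing is $O(n)$. Your accounting of the inner-loop iterations is slightly sharper than the paper's (you note that eliminations total at most $n$ across the entire run rather than $n$ per phase), but this does not change the final $O(n^2\log n)$ bound.
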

\begin{proof} 
We first compute the time of completing one round of the auction (i.e., one iteration of the outer while loop) and then upper bound the total number of rounds.  
 Observe that the inner while loop completes at most $n$ times per iteration of the outer while loop since a bidder is either removed from $A$ or added to $S_{t}$ on each iteration.  The body of the inner while loop completes in $O(n)$ time as at most $n$ value queries are made to find the bidder of largest marginal contribution.  Thus, in total, lines \ref{algline:startouter} through \ref{algline:endouter} take $O(n^2)$ time per iteration of the outer while loop.  

We now move to bound the number of iterations of the outer while loop.   Observe that if at some iteration of the outer while loop the estimate $\gOPT$ exceeds the actual value of the optimal solution then the inner while loop will only terminate when $A \setminus \left(S_{t-1} \cup S_{t}\right) = \emptyset$, which, by consequence, will also terminate the outer while loop.  However, we have that $\gOPT$ begins as the single highest value of any individual bidder and, by submodularity, $\OPT$ is at most $n$ times this initial value.  Thus, since $\gOPT$ doubles in each round, there are at most $O(\log{n})$ iterations of the outer while loop.  Thus, in total, the first phase of the auction terminates in $O(n^2 \log{n})$ time.

Since the pruning phase (lines \ref{algline:prunestart} through \ref{algline:pruneend} and \textsc{Maximize-Value}) completes in $O(n)$ time (even if the sum of clock prices needs to be computed again),   the entirety of the auction completes in $O(n^2 \log{n})$ time.
\end{proof}

\subsection{Proof of Lemma~\ref{lem:lowerboundInstance}}\label{subsec:lowerbound}
\begin{proof}
Consider an instance with the following 4 sets of bidders $A_1, A_2, A_3, A_4$.  $A_1$ consists of a single bidder $i_1$ with $v(\{i_1\})=1$ and $c_{i_1}=B$. $A_2$ consists of 3 bidders $i_2,i_3,i_4$  where $v(\{i_2\})=v(\{i_3\})=v(\{i_4\})=\frac{2}{3}+\epsilon$. The costs of the bidders in $A_2$ are $c_{i_2}=c_{i_3}=0, c_{i_4}=\frac{(2/3+\epsilon)B}{2}$.
$A_3$ consists of $\frac{4}{3\epsilon}$ identical bidders with value $\epsilon$ and cost 0. Finally $A_4$ consists of $\frac{8}{\epsilon}$ bidders with value $\frac{\epsilon}{2}$ and cost $\frac{(\epsilon/2+\delta)B}{4}$ where $\delta \ll \epsilon$. All of the bidders have additive value with each other except for $i_2$ and the bidders in $A_3$ which are ``capped additive'' (i.e., budget additive) with cap $4/3 = v(A_3)$.  In other words, for any $S \subseteq A_3 \cup \{i_2\}$ we have $v(S) = \min\left\{\sum_{i \in S}{v(\{i\})}, 4/3\right\}$ and for any output set $S \subseteq A_1 \cup A_2 \cup A_3 \cup A_4$ we have that $v(S) = \sum_{i \in S; i \notin A_3 \cup \{i_2\}}{v(\{i\})} + \min\left\{\sum_{i \in S; i \in A_3 \cup \{i_2\}}{v(\{i\})}, 4/3\right\}$.

Now run \textsc{Iterative-Pruning} with a budget $B$ and these bidders. The mechanism would start by initializing $S_1$ to $A_1$ with $\gOPT_1 = 1$ since $i_1$ has the highest individual value out of all the bidders. Then the mechanism would set $\gOPT_2=2$ and approach all of the bidders in $A_2$ offering a price of $\frac{(2/3+\epsilon)B}{2}$ to each of them leading to all of them being accepted to $S_2$. 

Since $v(A_2)\ge \gOPT_2$, the mechanism would set $\gOPT_3 = 4$ and move onto constructing $S_3$. First $i_1$ would be offered a price of $\frac{B}{4}$ causing it to reject. Then the auction would approach all of the bidders in $A_3$ with a price of $\frac{\epsilon B}{4}$ causing them all to accept and be added to $S_3$. Finally the mechanism would approach each bidder in $A_4$ with a price of $\frac{\epsilon B}{8}$ causing them all to reject since $\frac{(\epsilon/2+\delta)B}{4} >\frac{\epsilon B}{8}$. Since every item not in $S_2$ or $S_3$ has been rejected at this point, the initial while loop concludes giving us $W_1 = S_2 = A_2$ and $W_2 = S_3 = A_3$. By offering each of $i_2,i_3,i_4$ a price of $\frac{(2/3+\epsilon)B}{2}$ we exceed the total budget for $W_1$ and accordingly update the price of $i_4$ to $\frac{(2/3+\epsilon)B}{4}$. Then $i_4$ rejects this price so we are left with $W_1 = \{i_2\}\cup \{i_3\}$ and $W_2 = A_3$.

Moving to the \textsc{Maximize-Value} subroutine, we have $T = \{i_2\}$ so $W_3 = \{i_2\} \cup A_3$. Thus we are left with $v(W_1) = 4/3 + 2\epsilon$, $v(W_2) = 4/3$ and $v(W_3) = 4/3$ leading the mechanism to return $W_1$.  However, the optimal budget feasible solution consists of $\{i_2\}\cup \{i_3\} \cup A_3 \cup A_4 \setminus \{i^-\}$ where $i^-$ is the last item in $A_4$ since all of the bidders in $\{i_2\}\cup \{i_3\}\cup A_3$ have 0 cost and the cost of $A_4/\{i^-\}$ is $B(1+\frac{2\delta}{\epsilon} - \frac{\epsilon}{8}+\frac{\delta}{4}) < B$. $i_1,i_4$ have much worse marginal densities per cost than the bidders in $A_4$ so they are left out of $\OPT$ in favor of $A_4$. Thus we have $\OPT = v(\{i_2\}\cup \{i_3\} \cup A_3 \cup A_4) = \frac{2}{3}+\epsilon + \frac{4}{3} + 4 - \epsilon/2$ giving an approximation factor of $\frac{6+\epsilon/2}{4/3+2\epsilon}$ which is no better than a 4.5 approximation for arbitrary $\epsilon$.
\end{proof}

\section{Limitations of posted-price mechanisms}\label{sec:postedprice}

The existence of deterministic budget feasible clock auctions that achieve a constant approximation raises the question of whether there exist even simpler families of budget feasible mechanisms with which  one can obtain constant approximations mechanisms. In this section, we  study  deterministic posted-price mechanisms, which are arguably the simplest family of mechanisms. We show that even for the special cases of additive valuation functions (Section~\ref{sec:additive}) and symmetric valuation functions (Section~\ref{sec:symmetric}), there are no deterministic posted-price mechanisms that  achieve a constant approximation.
Recall that posted-price mechanisms approach sellers in some order and make ``take-it-or-leave-it'' offers.  In other words, a posted-price mechanism offers each seller $i$ a single price $p_i$ (the price offers can differ for each seller), which $i$ accepts if $p_i \geq \costi$ and rejects otherwise.   For a posted-price mechanism to be budget feasible, the sum of the prices of the accepted offers must not exceed the budget.

\subsection{Additive Valuation Functions}
\label{sec:additive}
We first examine the special case of additive valuation functions.  A valuation function $v$ is additive if for all $S \subseteq \bidders$ we have that $v(S) = \sum_{i \in S}{v_i}$. We show  that deterministic posted-price mechanisms cannot achieve an approximation factor better than $\Omega\left(\sqrt{n}\right)$.

\begin{theorem}
No deterministic posted-price mechanism can achieve an approximation better than $\sqrt{n}/2$ for instances with  additive valuation functions.
\end{theorem}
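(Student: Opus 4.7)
The plan is to exhibit a family of adversarial instances on which every deterministic posted-price mechanism underperforms, constructed so that the value profile is uniform ($v_i = 1$) with budget $B = 1$. The family will be parameterized by a target set $T \subseteq \bidders$ of size $\sqrt{n}$: for each $T$, bidders in $T$ have cost $1/\sqrt{n}$ and bidders outside $T$ have cost just above $B$. Every such instance has $\OPT = \sqrt{n}$ because $T$ is the only affordable subset of positive value and it saturates the budget exactly. The task is then, given a deterministic posted-price mechanism $M$, to choose $T$ so that $M$ extracts value at most $2$.

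The first step exploits determinism: because rejections do not consume any budget, the sequence of (bidder, price) pairs that $M$ would offer in an execution where every bidder rejects is uniquely determined. I would run this ``everyone refuses'' simulation to extract a canonical price list $p_1, p_2, \ldots, p_n$, and then choose $T$ as a function of this list. The adversary's choice of $T$ must both (i) make $\OPT = \sqrt{n}$ on the real instance and (ii) force $M$'s real execution to be consistent with (or at least no better than) the simulated rejecting execution.

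The proof then splits on how the canonical prices are distributed. In the \emph{timid} case, at least $\sqrt{n}$ of the canonical prices are strictly below $1/\sqrt{n}$; I would place $T$ entirely inside this timid pool. Then every $i \in T$ satisfies $c_i = 1/\sqrt{n} > p_i$ and still rejects, while every $i \notin T$ has $c_i > B \geq p_i$ and also rejects, so the real execution exactly reproduces the simulation and $M$ wins $0$. In the \emph{aggressive} case, more than $n - \sqrt{n}$ canonical prices are at least $1/\sqrt{n}$, and I would place $T$ among the bidders receiving the largest canonical prices. A budget-feasibility calculation on the real execution then bounds $|W|$ by a constant via $\sum_{i \in W} p_i \leq B = 1$ together with the large per-bidder price, giving $M \leq 2$ as required.

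The main obstacle will be the aggressive case, because an adaptive mechanism may alter its offers the moment the first $T$-bidder accepts, so the real trace need not match the simulated trace. I expect to resolve this by arguing that even accounting for any such divergence, any acceptance comes at a large enough budget cost that at most $2$ acceptances are possible before the budget is exhausted — this is a potential-style argument tracking spent budget against the canonical prices that would have been offered along any divergent branch. Once both cases yield $M \leq 2$ while $\OPT = \sqrt{n}$, the approximation ratio bound of $\sqrt{n}/2$ follows.
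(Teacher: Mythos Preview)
Your aggressive-case argument has a genuine gap that cannot be repaired within the instance family you propose. Consider the deterministic posted-price mechanism that simply offers price $1/\sqrt{n}$ to bidders $1,2,\ldots$ in order, stopping once $\sqrt{n}$ have accepted (and offering $0$ thereafter). In the all-reject simulation every canonical price equals $1/\sqrt{n}$, so you are forced into the aggressive case regardless of how you choose $T$. But in the real execution on your instance, each $i\in T$ has $c_i=1/\sqrt{n}$ and accepts, each $i\notin T$ has $c_i>B$ and rejects, and the mechanism collects all $\sqrt{n}$ bidders of $T$ while spending exactly $\sqrt{n}\cdot(1/\sqrt{n})=B$. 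Thus $v(M)=\OPT$ on every instance in your family, so the adversary fails against this mechanism. The hoped-for potential argument (``any acceptance comes at a large enough budget cost that at most $2$ acceptances are possible'') is simply false: the first acceptance costs $1/\sqrt{n}$, and after divergence the mechanism is free to keep offering $1/\sqrt{n}$, so $\sqrt{n}$ acceptances fit in the budget.

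The deeper reason your construction is too weak is that it uses a single hidden cost level $1/\sqrt{n}$: a mechanism that guesses this level correctly pays exactly the right price for every cheap bidder, and ``guessing'' requires no information about $T$ at all. The paper's proof avoids this by breaking the symmetry of values rather than hiding costs: it plants one bidder of value $\sqrt{n}$ with cost exactly $B$ alongside $n-1$ unit-value bidders, and then exploits that any positive price to a small bidder (before the large one) can be made to consume the entire budget, while committing the full budget to the large bidder can be punished by making the small bidders nearly free. This forces the mechanism into a lose-lose commitment rather than letting it probe at a single safe price.
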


\begin{proof}
We consider a family of instances with $n$ bidders where there is a single bidder with value $\sqrt{n}$  and each remaining bidder has value $1$.  We denote the high value bidder $b_h$ and in each instance $\costi[b_h] = B$.  The family of instances differ only on the costs of the small value bidders.  We perform case analysis on the offers that any mechanism makes to the bidders. 

\paragraph{Case 1}
Suppose the mechanism $\mathcal{M}$ offers some positive price to a small value bidder before it makes an offer to bidder $b_h$.  Let $i$ denote the first small value bidder the mechanism makes a positive offer $p_i$ to. Consider the instance where $c_i = p_i$ and all the other sellers have cost $B$. Then, regardless of the other offers, to maintain budget feasibility the mechanism can only obtain value $1$, where the optimal solution is to output bidder $b_h$ and obtain value $\sqrt{n}$. We then have:
\[\alpha \geq \frac{\OPT}{v(\mathcal{M})} = \frac{\sqrt{n}}{1}> \sqrt{n}/2. \]

\paragraph{Case 2} Suppose the mechanism $\mathcal{M}$ offers each bidder before $b_h$ price $0$ and offers the entire budget $B$ to bidder $b_h$. Consider the instance where all the small bidders have cost $\frac{1}{n-1}$.  The mechanism will then obtain total value equal to $\sqrt{n}$.  On the other hand, the optimal solution would be to output the $n-1$ smaller bidders and the value would be $n-1$. We then have:
\[\alpha \geq 
\frac{\OPT}{v(\mathcal{M})} = \frac{n-1}{\sqrt{n}} > \sqrt{n}/2.\]

\paragraph{Case 3} Suppose the mechanism $\mathcal{M}$ offers each bidder before $b_h$ price $0$ and offers price $p_h < B$ to bidder $b_h$.  Consider the instance where the cost of each seller is $B$.  Then the mechanism can only obtain value $1$, whereas the optimal solution is to output bidder $b_h$, obtaining a value of $\sqrt{n}$.  We then have:
\[\alpha \geq \frac{\OPT}{v(\mathcal{M})} = \frac{\sqrt{n}}{1}> \sqrt{n}/2. \qedhere \] 
\end{proof}

\subsection{Symmetric Submodular Valuation Functions}
\label{sec:symmetric}

We now consider another special subclass of submodular valuation functions. 
A function $v : 2^{\bidders} \rightarrow \mathbb{R}^{\geq 0}$ is \emph{symmetric} submodular if there exist $r_1 \geq r_2 \geq \dots \geq r_n \geq 0$, such that $v(S) = \sum_{i = 1}^{|S|} r_i$ for all $S \subseteq 2^\bidders$.  This class of functions was studied in the work of \citet{vickrey1961counterspeculation} on multi-unit auctions and was studied in the context of budget feasible procurement by \citet{singer2010budget} and \citet{BKS12}.  We show that within this restricted family of instances, where the goal of the auctioneer is to maximize the number of sellers that accept the prices offered to them, no deterministic posted-price mechanism can achieve a constant approximation. \footnote{We also note that our lower bound also applies to symmetric \emph{additive} valuations, i.e., where $r_i = 1$ for all $i \in [n]$}
\begin{theorem}\label{lem:PostedLowerBound}
No deterministic posted-price mechanism can achieve an approximation factor better than $\frac{\log n}{4}$ for instances with symmetric submodular valuation functions.
\end{theorem}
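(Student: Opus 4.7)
The plan is to exhibit an adversarial family of $\Theta(\log n)$ cost-uniform instances on which no deterministic posted-price mechanism can be simultaneously competitive. Since any symmetric additive valuation (with $r_i=1$ for every $i$) is a symmetric submodular valuation, it suffices to construct the bad instances for the counting valuation $v(S)=|S|$. For $L = \lfloor\log n\rfloor$, I would take the $L+1$ instances $I_0,I_1,\dots,I_L$ where every bidder in $I_k$ has cost $c_k = B/2^k$, giving $\OPT(I_k)=\min(n,\lfloor B/c_k\rfloor)=2^k$.

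Viewing the deterministic posted-price mechanism as a decision tree whose nodes carry price offers $p(v)$ and whose branches correspond to accept/reject, each instance $I_k$ traces a unique root-to-leaf path $\pi_k$, taking the accept branch at node $v$ iff $p(v)\geq c_k$. Writing $V_k$ for the count of accepts on $\pi_k$, budget feasibility along $\pi_k$ gives $V_k\cdot c_k \leq \sum_{v\in\mathrm{accepts}(\pi_k)} p(v)\leq B$, so $V_k\leq 2^k$. Supposing for contradiction that the mechanism achieves approximation $\rho<\log n/4$ yields $V_k > 4\cdot 2^k/\log n$ for every $k$, and dividing by $2^k$ and summing over $k$ gives $\sum_{k=0}^L V_k/2^k > 4(L+1)/\log n > 4$. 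The core technical step is to derive a matching upper bound $\sum_k V_k/2^k\leq 4$ from the tree structure, which contradicts the above.

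The main obstacle is sharpening the charging argument enough to hit the target constant. To each node $v$ appearing on some $\pi_k$, I would associate the threshold $t_v=\lceil\log(B/p(v))\rceil$ and the interval $[a_v,b_v]$ of indices $k$ with $v\in\pi_k$; this is an interval because the family of paths forms a binary-search-like structure on $\{0,\dots,L\}$. The contribution of $v$ to $\sum_k V_k/2^k$ is the geometric sum $\sum_{k=\max(a_v,t_v)}^{b_v}1/2^k\leq 2p(v)/B$, and combining with the $L+1$ per-path budget constraints $\sum_{v\in\mathrm{accepts}(\pi_k)} p(v)\leq B$ naively yields only $\sum_k V_k/2^k\leq 2(L+1)$, which is too weak. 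To close the gap I would separate decision-tree nodes into \emph{splitting} nodes (where paths for different $k$ diverge at the threshold, i.e.\ $t_v$ lies strictly inside $[a_v,b_v]$) and \emph{homogeneous} nodes (where all paths through the subtree take the same accept/reject decision), and exploit the fact that homogeneous-accept nodes consume budget on every path they lie on and thus can appear only a bounded number of times; handling the two classes separately should produce the desired constant. If the direct charging still does not yield $4$, an inductive case analysis on the mechanism's root offer, branched $\Theta(\log n)$ times across cost scales in the spirit of the three-case $\sqrt n/2$ lower bound for additive valuations, would be a natural fallback.
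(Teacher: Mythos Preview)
Your family of uniform-cost instances cannot force any lower bound: there is a deterministic posted-price mechanism that is exactly optimal on every $I_k$ simultaneously. Consider the mechanism that offers price $c_L=B/2^L$ to the first bidder; if rejected, offers $c_{L-1}$ to the second; and so on, until some offer at price $c_k$ is accepted. At that point it has learned $k$ exactly (all costs in $I_k$ equal $c_k$), having spent only $c_k$; it then offers $c_k$ to $2^k-1$ further bidders and wins $V_k=2^k=\OPT(I_k)$ within budget. Hence $\sum_k V_k/2^k=L+1$, and your target inequality $\sum_k V_k/2^k\le 4$ is simply false, so no refinement of the splitting/homogeneous charging can rescue it. The structural reason is that in posted-price auctions \emph{rejected} offers are free, so a single-cost instance can be learned by a sequence of cheap rejections before any real budget is committed.

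The paper's construction blocks exactly this learning-for-free trick. Instead of uniform costs, instance $k$ contains $k$ groups of geometrically growing sizes $2^{j-3}\log n$ and geometrically shrinking costs $B/(2^{j-3}\log n)$, followed by a tail of cost-$B$ bidders; crucially, instances $k$ and $k+1$ agree on the first $k$ groups, with the groups laid out in the order the mechanism approaches bidders. This forces the mechanism's trajectory, and hence its spending, on those shared groups to be identical across instances, so meeting the $\frac{\log n}{4}$ target at each new level $j$ demands an additional $\Theta(B/\log n)$ of budget; after $\Theta(\log n)$ levels the budget is exhausted, and one further instance with very cheap remaining bidders produces the gap. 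Your fallback idea of an inductive case analysis on the root offer is closer in spirit, but the essential ingredient you are missing is a family of instances whose cost profiles agree on a growing prefix, so that the mechanism cannot distinguish them without paying.
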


\begin{proof}
\begin{figure}
    \centering
\tikzset{ text width =30cm, every node/.style={scale=.8}} 
\begin{tikzpicture}[x=0.75pt,y=0.75pt,yscale=-1,xscale=1]

\node[] (Instance1) at (0,0) {Instance 1 \ \ \ \ \ $\underbrace{\frac{4B}{\log n}, \frac{4B}{\log n},\cdots ,\frac{4B}{\log n}}_{\textstyle\#=\frac{\log n}{4}} \ \ \ \ \   B,B,\dots,B$};

\node[] (Instance2) at (0,70) {Instance 2 \ \ \ \ \ $\underbrace{\frac{4B}{\log n}, \frac{4B}{\log n},\cdots ,\frac{4B}{\log n}}_{\textstyle\#=\frac{\log n}{4}} \ \ \ \  \underbrace{\frac{2B}{\log n},\frac{2B}{\log n},\dots,\frac{2B}{\log n}}_{\textstyle\#=\frac{\log n}{2}} \ \ \ \ \  B,B,\dots,B$};

\node[] (Instancek) at (0,140) {Instance k \ \ \ \ \ $\underbrace{\frac{4B}{\log n}, \frac{4B}{\log n},\cdots ,\frac{4B}{\log n}}_{\textstyle\#=\frac{\log n}{4}} \ \ \ \underbrace{\frac{2B}{\log n},\frac{2B}{\log n},\dots,\frac{2B}{\log n}}_{\textstyle\#=\frac{\log n}{2}} \  \dots \  \underbrace{\frac{B}{2^k\log n},\frac{B}{2^k\log n},\dots,\frac{B}{2^k\log n}}_{\textstyle\#=2^{k-3}\log n}\ \ \ \ \   B,B,\dots,B$};

\node[] (Dots) at (25,90) {\Huge$\vdots$};
\end{tikzpicture}
    \caption{Seller costs for the instances used in the construction of Lemma~\ref{lem:PostedLowerBound}}
    \label{fig:cpplot}
\end{figure}

Consider a specific symmetric function $v(S) = |S|$.  We define a family of instances, where in instance $k$, we partition the bidders into $k+1$ groups. Each of the $j \leq k$ groups contains  $2^{j-3} \log n $ bidders each with cost $\frac{B}{2^{j-3}\log n}$.  Notice that each group is budget feasible since $2^{j-3}\log n \cdot \frac{B}{2^{j-3}\log n} = B$. Then the $k+1$-th group contains all the remaining bidders each with cost $B$. For example, instance 1 has $\frac{\log n}{4}$ bidders with cost $\frac{4B}{\log n}$, and all the remaining bidders have a cost of $B$. We can see that the optimal solution in instance $k$ is to output all sellers in group $k$, let $\OPT_k$ denote the optimal value of instance $k$, we have:
\[\OPT_k = 2^{k-3} \log n.\]

In order to achieve the $\frac{\log n}{4}$ approximation factor in any instance $k$, a mechanism $\mathcal{M}$ needs to output at least $\frac{\OPT}{\log n /4} \geq 2^{k-1}$ sellers in instance $k$. Therefore to simultaneously achieve the $\frac{\log n}{4}$ approximation factor in instance 1 through $k$, $\mathcal{M}$ needs to output $2^{j-1}$ bidders in instance $j$ for all $j \in [1,k]$ and the minimum amount the mechanism needs to pay is then
\[ \frac{4B}{\log n} + \sum_{j = 2}^{k} 2^{j-2} \cdot \frac{4B}{2^{j-1}\log{n}}= \frac{2B}{\log n} \cdot (k+1)\]
by purchasing exactly $2^{j-1}-2^{j-2} = 2^{j-2}$ bidders from group $j$.\\

Solving $\frac{2B}{\log n} \cdot (k+1) = B$ we get that $k = \frac{\log n}{2} -1 = \log \sqrt{n}-1$. In other words, to satisfy $\log{\sqrt{n}-1}$ instances, we need to use all of our budget. Now let $n_o$ be the total number of bidders with cost less than $B$ in the $\log{\sqrt{n}} - 1$-th instance.  We then have:

\[ n_o =  \sum_{j = 1}^{\log \sqrt{n}-1} 2^{j-3} \log{n}, \]
and by geometric sum we have:
\[n_o = \frac{\log n}{4} \cdot \frac{2^{\log \sqrt{n}-1}-1}{2-1} = \frac{ \sqrt{n}/2 \log n - \log n}{4} < \frac{\sqrt{n}\log{n}}{8}\]
Now consider a instance with $\log{\sqrt{n}}-1$ groups as we defined, (each group $j \in [1, \log{\sqrt{n}}]$ has $2^{j-3}\log{n}$ sellers and each seller costs $\frac{B}{2^{j-3}\log{n}}$),  and remaining sellers all have a cost $\frac{B}{n-n_o}$, we get:
\[\OPT =n - n_o > n - \frac{\sqrt{n} \log n}{8}\]

However, the mechanism would have used up the budget in the process of guaranteeing the approximation factor in the $\log{\sqrt{n}}$ instances we defined, therefore it has no remaining budget to purchase any seller.  But then, the total value the mechanism must obtain is
\[v(\mathcal{M})= 2^{\log{\sqrt{n}}-2} = \frac{\sqrt{n}}{4}.\]
Therefore, the approximation factor is at least:
\[\alpha = \frac{\OPT}{v(\mathcal{M})} \geq \frac{ n - \sqrt{n} \log n/8}{\sqrt{n}/4} = 4\sqrt{n} - \frac{\log n}{2} > \frac{\log n}{4}\]
completing the proof (since the valuation function $v$ is a symmetric submodular function).
\end{proof}
We now present a mechanism that achieves a $O(\log{n})$ approximation to the optimal welfare with symmetric submodular valuations.  Note that we assume that $c_i \leq B$ for all agents $i$.

    \begin{algorithm}[H]\label{alg:PostedSymmetricSub}
    \SetKwInOut{Input}{Input}
    \Input{A public budget $B$, and an arbitrarily ordered set of bidders $\{i\}_{[n]}$ with private costs $c_i$\; a public additive valuation function $\V$.}
    Initialize $a\leftarrow0$\;
    Initialize $W \leftarrow \emptyset$\;
    set aside an arbitrary agent $j$\;
    \For{ $i \in [n]\setminus \{j\}$}{
    \If{ $a = 0$}{
    Offer $p_i \leftarrow \frac{B}{2\ln n}$ to agent $i$\;
     }
        \Else{
        Offer $p_i \leftarrow \frac{B}{a\cdot 2\ln n}$ to agent $i$\;}
    \If{\text{agent} i accepts}{
        $a \leftarrow a+1$\;
        $W \leftarrow W \cup \{i\}$}
        }
    \If{a = 0}{
        Offer $p_j \leftarrow B$ to agent $j$\;}
    \Return W
    \caption{A posted-price mechanism for symmetric submodular valuations.}
    \end{algorithm}

\begin{theorem}
Mechanism~\ref{alg:PostedSymmetricSub} obtains a $O(\log{n})$ approximation to the optimal value for instances with symmetric submodular valuations. 
\end{theorem}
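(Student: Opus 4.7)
The plan is to verify budget feasibility by a straightforward harmonic-sum calculation and then to argue the $O(\log n)$ approximation in two steps: first bound $|O|/|W|$, where $O$ denotes the optimal set and $W$ the mechanism's output, and second convert this into a value bound using the concavity that is forced by the symmetric valuation.

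For budget feasibility, if the for-loop terminates with $a = 0$ then the fallback offers price $B$ to the set-aside bidder $j$, which is trivially budget feasible since a single payment of $B$ exhausts the budget. Otherwise, if $a^* \geq 1$ bidders accept inside the loop, the price charged to the first acceptance is $B/(2\ln n)$ and the price charged to the $k$-th acceptance (for $k \geq 2$) is $B/((k-1)\cdot 2\ln n)$, totaling
\[
\frac{B}{2\ln n}\bigl(1 + H_{a^*-1}\bigr) \;\leq\; \frac{B}{2\ln n}(\ln n + 2) \;\leq\; B,
\]
for $n$ sufficiently large (a small additive tweak handles tiny $n$).

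For the approximation, I would split on whether the loop produces any acceptances. In the main case $a^* \geq 1$, any bidder $i \in O \setminus W$ rejected a price $p_i$ that was offered when the current count was some $a_i \leq a^*$; since $p_i \geq B/(a^* \cdot 2\ln n)$ (this holds both when $a_i = 0$ and when $a_i \geq 1$, using $a^* \geq 1$), the rejection means $c_i > B/(a^* \cdot 2\ln n)$. Summing these inequalities over $O \setminus W$ and using $\sum_{i \in O} c_i \leq B$ gives $|O \setminus W| < 2 a^* \ln n$, hence $|O| \leq (1 + 2\ln n)\,|W|$. In the fallback case $a^* = 0$, every bidder other than $j$ rejected the price $B/(2\ln n)$, so the same reasoning (now bounding costs of $O \setminus \{j\}$ by $B$) yields $|O| \leq 2\ln n + 1 = (2\ln n + 1)\,|W|$, since $|W| = 1$.

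Finally, to convert the set-size bound into a value bound, I would exploit the fact that for a symmetric valuation $v(S) = \sum_{i=1}^{|S|} r_i$ with $r_1 \geq r_2 \geq \cdots \geq 0$, the average $v(S)/|S|$ is non-increasing in $|S|$ (this is just submodularity viewed as concavity in the cardinality). Therefore $v(O)/|O| \leq v(W)/|W|$, which rearranges to
\[
\frac{v(O)}{v(W)} \;\leq\; \frac{|O|}{|W|} \;=\; O(\log n),
\]
completing the proof. I expect the main obstacle to be the $a^* = 0$ case: the analysis there must rely entirely on the fallback bidder $j$ to carry a nontrivial lower bound on $v(W)$, and this is precisely why the mechanism reserves $j$ before the loop rather than offering to every bidder uniformly.
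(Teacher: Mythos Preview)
Your approach is essentially the paper's: bound $|O|$ in terms of $|W|$ using that rejected bidders have cost exceeding the last offered price, then convert to a value bound via the decreasing $r_i$'s. Your concavity phrasing $v(O)/|O|\le v(W)/|W|$ is just a cleaner repackaging of the paper's step $\OPT \le v(\mathcal{M}) + (|O|-k)r_k$ together with $v(\mathcal{M})\ge k r_k$.

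There is one small oversight. In the main case $a^*\ge 1$, you write ``any bidder $i\in O\setminus W$ rejected a price $p_i$,'' but the set-aside bidder $j$ is never offered a price in the loop and is not added to $W$ when $a^*\ge 1$; so if $j\in O$ it sits in $O\setminus W$ without having rejected anything, and you cannot lower-bound $c_j$. The fix is trivial---account for $j$ separately, giving $|O\setminus W|\le 2a^*\ln n + 1$ and hence $|O|/|W|\le 2\ln n + 1 + 1/a^*$, still $O(\log n)$---but as written the sentence is not quite true. You handled this correctly in the fallback case; just carry the same care into the main case. Also, the concavity step needs $|W|\le |O|$, which you use implicitly; when $|W|>|O|$ monotonicity gives $v(W)\ge v(O)$ directly, so it is harmless, but worth stating.
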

\begin{proof}
First, if all the agents accept our offer, the total payment would be 
\[(1 + 1+ 1/2 + 1/3 + \dots + 1/(n-1)) \frac{B}{2\ln{n}} \leq \frac{\ln{(n-1)}+2}{2\ln{n}}B < B\]
Therefore, at anytime of the execution of the mechanism, we would not exhaust the budget. Let $r_i$ be the marginal gain of adding the the $i^{\text{th}}$ agent to the winning set. Now consider the following cases:

\paragraph{Case 1}If no agent accepts the offer, the mechanism $\mathcal{M}$ would return the agent $j$ giving $v(\mathcal{M}) = r_1$, since each agent is offered and then rejected at a price of $\frac{B}{2\ln{n}}$, we have for each agent $i$, $c_i > \frac{B}{2\ln{n}}$. We can fit at most $2 
\ln n +1 $ more agents. $\OPT \leq \sum_{i = 1}^{2\ln n+1} r_i$. By the definition of symmetric submodular, $r_1 \geq r_2 \geq \dots \geq r_n$, we have $\OPT \leq \sum_{i = 1}^{2\ln n+1} r_i \leq (2 \ln n +1) \cdot r_1$
\[\alpha = \frac{\OPT}{v(\mathcal{M})} =\frac{(2\ln{n}+1)\cdot r_1}{r_1} < 2\ln{n}+1.\] 

\paragraph{Case 2}Now let $k$ be the number of agent returned by $\mathcal{M}$, we first have $v(\mathcal{M}) = \sum_{i = 1}^{k} r_i \geq k \cdot r_k$. Now for any agent $j$ rejected after the $k^{th}$ accepted agent, the price offered is $\frac{B}{k \cdot 2\ln{n}}$, therefore we have that $c_j > p_j = \frac{B}{k\cdot 2\ln{n}}$. For any agent $i$ rejected before agent $k$, we have $c_i > p_i > p_j = \frac{B}{k\cdot 2\ln{n}}$, therefore the optimal solution can at fit less than $k\cdot 2\ln{n}+k$ agents. $\OPT < \sum_{i = 1}^{k\cdot2\ln n+k} r_i \leq v(\mathcal{M}) + (k\cdot2 \ln n)r_k$. We have:
\[\alpha = \frac{\OPT}{v(\mathcal{M})} = \frac{v(\mathcal{M}) + (k\cdot2 \ln n)r_k}{v(\mathcal{M})} = 1 + \frac{(k\cdot2 \ln n)r_k}{v(\mathcal{M})}  < 1+\frac{(k\cdot2 \ln n)r_k}{k \cdot r_k} = 2\ln{n}+1\qedhere\]
\end{proof}

\end{appendices}

\newpage

\bibliographystyle{plainnat}
\bibliography{bibliography}

\end{document}